\definecolor{Gray}{gray}{0.85}
\newtheorem{pro}{Proposition}
\newtheorem{re}{Remark}
\begin{document}
	
		\title{Cost-Effective Radio and NFV Resource Allocation: E2E QoS Provision
		 }
		\author{\IEEEauthorblockN{Abolfazl Zakeri, \IEEEmembership{Member IEEE}, Narges Gholipoor, Mohammad Reza Javan, \IEEEmembership{Senior Menmber,
				IEEE}, Nader Mokari, \IEEEmembership{Senior Member, IEEE}, and Eduard A. Jorswieck,
			\IEEEmembership{Fellow, IEEE}}
				\IEEEcompsocitemizethanks{\IEEEcompsocthanksitem A. Zakeri, N. Gholipoor, and N. Mokari are with the Department of ECE,
		Tarbiat Modares University, Tehran, Iran (email: \{Abolfazl.zakeri, gholipoor.narges and nader.mokari\}@modares.ac.ir). 
	Mohammad R. Javan is with the Department of Electrical and Robotics Engineering, Shahrood University of Technology,
	Shahrood, Iran (javan@shahroodut.ac.ir).
	Eduard A. Jorswieck is with  TU Braunschweig, Department of Information Theory and Communication Systems, Braunschweig, Germany (jorswieck@ifn.ing.tu-bs.de).
}}
\IEEEdisplaynontitleabstractindextext
\IEEEpeerreviewmaketitle
\IEEEtitleabstractindextext{
	\begin{abstract}
To fend off network ossification and support high degrees of flexibility and various services, network virtualization and slicing are introduced for the next-generation wireless cellular networks.  These two technologies allow diversifying attributes of the future inter-networking and time-varying workloads based resource management paradigms.
In this paper, we propose an end-to-end (E2E) resource allocation framework for future networks considering radio and core network 
by leveraging  network function virtualization (NFV). We aim to minimize the network cost defined based on the utilized energy and spectrum while providing E2E quality of service (QoS) for diverse services with stringent QoS requirements. {This goal is realized by formulating a novel optimization problem which performs the power and spectrum allocation in radio, and service function chaining and
	scheduling in the NFV environment while guaranteeing the distinct QoS constraints of the requested services.
The proposed optimization problem is mixed-integer non-linear programming, which is a non-convex and NP-hard problem. To solve it, we adopt an iterative algorithm with novel admission control and a greedy-based heuristic algorithm, which is shown to have a polynomial order of complexity with $ 13.66 $\% global optimality gap on average for a small scaled network. 
To validate the proposed framework, simulation results are carried out by considering different values of the network parameters and topologies. 
Moreover, our proposed framework and solution algorithm are assessed and compared with the existing works.}
Simulation results demonstrate that the proposed heuristic algorithm and framework outperforms the existing ones by $34$\%  on average in cost reduction. 
\end{abstract}
\begin{IEEEkeywords}
	Resource allocation, optimization, network function virtualization (NFV), E2E QoS, energy minimization.
	\end{IEEEkeywords}
}
\maketitle
\section{{introduction}}
\IEEEPARstart{T}{o fulfill the proliferation} of the data traffic and various applications requirements, communication service providers (CSPs) need to re-design their infrastructure to support programmability and fine granularity against the rigid networks \cite{8320765}. 	At the same time, CSPs are under pressure to keep up with the capacity
demands and launch differentiated offerings at a short time in a highly competitive service and market. 
Fifth-generation (5G) and beyond are being standardized to meet these requirements by leveraging the network function virtualization (NFV) and softwarization technologies \cite{8125672}.  NFV is introduced as an interesting technology  to reduce the network cost and time to market by virtualizing all the appliances such as servers, routers, storage, and switches \cite{mijumbi2015design, mijumbi2016network,herrera2016resource, 8675284,riera2014virtual}. Moreover, NFV not only provides the commercial off-the-shelf hardware to run a wide spectrum of the virtual network functions (VNFs)\footnote{Examples of VNFs includes firewall, deep packet inspection, transcoding, and load balancing \cite{8125672}. In this paper,  VNF and network function (NF) are the same.} and deploy cloud-native networks and applications
but  it is also a key enabler of  network slicing \cite{alliance2016description} that allows creating multiple logical networks from a physical network \cite{7926921,7243304}. However, some challenges are raised in this area such as   NFV resource allocation and orchestration \cite{herrera2016resource,7243304,riera2014virtual}. 
{This paper focuses on resource management on core and access networks applying NFV and network slicing from network cost perspective.}
 \subsection{Background to NFV and Resource Allocation}
 NFV environment comprises of three entities, namely VNFs, NFV infrastructure (NFVI), and NFV management and orchestration (NFV-MANO) \cite{etsi2014network1, etsi2014network}.
 Note that each  NS consists of  multiple  elements, namely  VNF forwarding graphs (VNF-FGs),  virtual links, physical network functions (PNFs),
 VNFs, and NFVI where they requires a  new and different sets of management and orchestration  functions. 
 Generally speaking, these functions refer to the NFV resource allocation (RA) and orchestration that are  widely appeared in the literature in recent years \cite{7962178,mijumbi2015design,zeng2018stochastic,6782394,herrera2016resource,7938391,8631169}.
\\\indent
 NFV-RA consists of three phases: 1) VNF-FG in which the chaining and the connectivity of the VNFs in an NS is determined, which is also known as the service function chaining (SFC) \cite{7945848}, 2) VNF embedding (placement) in which VNFs are mapped to servers/virtual machines (VMs) \cite{cohen2015near}, and 3) VNF scheduling in which the running time for a VNF under given constraints is determined \cite{7243304}. 
  Each of the above mentioned phases has a pivotal impact on the network performance, 
 its reliability, operation cost, and the experienced QoS. Nevertheless, optimizing all of these phases gives significant reduction in the cost for CSPs and provides various range of services/applications for the end-users and verticals in the shortest time to market.
 \\\indent
 In addition to the NFV resource orchestration in the cloud-based core network, the access network plays a key role in the QoS provisioning and user experience as well as the network OpEx, and has some impacts on the NFV resource allocation regarding the cloud-radio access network (cloud-RAN) and {the} generated traffic \cite{hossain20155g, 7143328, 8004168,7143328}. {Note that 
 proposing an NFV and radio RA framework to provide E2E QoS for end-users and network cost reduction is the main focus of this paper.}
\subsection{{\textcolor{black}{ Related Works}}}
{ 
	We provide a review on related works which are
	categorized into three groups, namely NFV, network slicing, and network cost model.} 
\subsubsection{NFV-RA}
	{Based on the previous discussions on NFV, we can further divide  NFV-RA into scheduling, embedding, and SFC problems as discussed in the following. }
	\paragraph{VNF-Scheduling} 
In \cite{7490404}, a VNF scheduling problem is investigated and a joint VNF scheduling and traffic steering problem is formulated as a mixed-integer nonlinear problem (MINLP). A low computational complexity matching-based algorithm is devised for online VNF scheduling in \cite{game-theroy}. The authors in \cite{7502870} study the VNF scheduling by formulating a 
	MILP problem whose objective is to minimize the latency of all VNFs. 
	They adopt a genetic algorithm to solve the optimization problem in a low complexity manner. In our proposed scheduling model, the processing latency is captured  from the resources given to each VNF and the amount of bit rate passing through the VNF compared to the fixed processing latency that is assumed  in \cite{riera2014virtual, mijumbi2015design, 7490404, game-theroy, 7502870}.
	\paragraph{VNF-Embedding/Mapping/Placement}
	In \cite{cohen2015near}, the problem of NF placement is studied
		and the cost of having VMs\footnote{In this paper, VM, node, and server have the same meaning, and the cost of steering the traffic into the servers are investigated.} {are studied.}
		  An automated decentralized method for online placement and optimization of VMs in NFV-based network is proposed in \cite {8501940}.
	In \cite{8281644},  VNF embedding with the aim of minimizing  time-varying workloads of physical machines is studied. Furthermore, users' SFC requests and factors such as basic resource consumption and time-varying workload are {taken} into consideration.
	 The authors in \cite{7859379} formulate a joint operational and traffic cost optimization problem whose goal is finding a cost-efficient VNF placement algorithm. To solve it, they propose a modified version of  
	the Markov approximation technique that is a combination of Markov and matching algorithm. The reason behind the proposed approach is that the Markov approximation suffers from a long-time convergence, and cannot be applied in practice for large networks.
	\paragraph{VNF-SFC}
	In an NFV-based network, each service consists of a set of NFs that need to be executed in a specific order to provide the service, which is called SFC. Each SFC is the heart of the insertion of particular business services into the network and  its simple definition is linkage of NFs to form a NS. The SFC problem in NFV context is widely investigated in the literature \cite{7417401,liu2017dynamic,8937740,7945848}.
	A dynamic SFC deployment is proposed in \cite{liu2017dynamic} in which the authors consider a trade-off the between resource consumption and operational overhead. The authors in \cite{8937740}  study the reliability concerns with reducing the experienced delays by incorporating the VNF-decomposition-based backup strategies into a MILP  problem. 
	\subsubsection{Combination of NFV-RA Phases and Network Slicing}
In \cite{mijumbi2015design},  an online scheduling and embedding algorithm is proposed in which the capacity of the available buffers and the processing time of each VNF is considered. 
 The authors propose a set of greedy-based algorithms for mapping and scheduling. Moreover, the cost, revenue, and acceptance ratio of these algorithms are compared.
 {The} VNF placement in a network with several mobile virtual network operators (MVNOs)  is investigated in  \cite{riggio2016scheduling}  in which a slice scheduling mechanism is introduced to isolate the traffic flow of MVNOs and optimize {the} VNF placement based on the available radio resources. {The} joint VNF placement and admission control (AC) with maximizing the network provider revenue in terms of bandwidth and capacity {are} studied in \cite{nejad2018vspace}.  
 The authors in \cite{kim2018performance} propose an RA algorithm that integrates the placement and scheduling of VNFs.  
{In \cite{8460139}, a framework for providing on-demand network slicing with leveraging softwarization and virtualization technologies is proposed.  The authors of that work define each slice as a SFC with a specific life cycle and end-to-end (E2E) delay as the key performance indicators. Then the allocation of the requested resources of each slice is formulated as an optimization problem in which the goal is to minimize the cost of the resource utilization. }

The authors in \cite{8845306} propose a new monitoring architecture that is a local entity called monitoring agent with an eye on the hierarchical architecture to orchestrate the network resources taking into account the resource demands in terms of slices. Moreover, they propose a new protocol for monitoring service status of local agents. However, they do not pay attention to the E2E resource orchestration, and mainly focus on the scalability of the architecture to reach a flexible network with negligible network overhead.
{The} authors in \cite{8647504} formulate a MILP optimization problem to orchestrate the underlying resources according to the users' requirements in a cost-efficient manner. Since the considered framework is based on the network slicing, they assume that each slice has a set of SFCs with some NFs that are virtually interconnected.  Actually, they propose a framework that maps the virtual SFCs into the underlying/physical networks under resources, links, and latency  constraints. However, the main drawback of their work is that they do not consider the processing and waiting time in the latency.
Therefore, {this approach is only} appropriate for core network slicing.  
\subsubsection{Cost Model}
Cost-saving (OpEx and CapEx) solutions are important for future networks not only for vendors and infrastructure providers but also for CSPs from revenue maximization aspect. Therefore, many works have appeared in this area \cite{8647504, kim2018performance, 7835175, cohen2015near, 8460139, 7934437, 7859379, 8247219}. One branch of studies,  introduces energy efficiency in cloud-enabled data centers \cite{8647504, 7934437}, and IoT networks \cite{8638582}.  
	In particular,  \cite{8672634} {formulates} the total cost of links and nodes by defining the link connection cost and VNF setup cost, and then investigates an embedding and routing policy such that the network cost is minimized. The objective of  \cite{cohen2015near} is to minimize the total system cost under the allocation of functions to nodes and assignment of clients to functions. The cost is defined based on the allocated resources of machines to functions. The authors of \cite{8647504} {aim}  to minimize the number of nodes hosting the NFs under the placement, latency, and bandwidth constraints. 
	Moreover, \cite{7949048} focuses on the energy saving in the cloud-RAN by determining the radio unit sleep scheduling and VM consolidation strategies. To this end, the authors define a general form of the power consumption which includes the active and static consumed power. To save more energy, they also {assume that if there is no} user to serve in radio units, VMs of the radio units are shut down.
\\\indent
  As previously mentioned,  one objective  of this paper is  to diminish {the} network cost defined as {the} number of active/on VMs in cloud-nodes and the radio power and spectrum. 
	The motivation behind this is that we study the radio and NFV cost alongside with each other to investigate the effect of each of which on different service types, total network cost, and {modeling of} the network slicing cost (see simulation results, especially, Fig. \ref{NC}).
	 For example,  high data rate services have more effect on the radio cost than  low latency services.   The works in  \cite{7949048} and \cite{8647504} have a similar approach to this paper in energy efficiency modeling as they {aim} to minimize {the} total power consumption in a cloud-RAN and {the} number of nodes hosting NFs.
	  However, their considered approach is not a practical model for taking into {account the}  total network cost, since the total network cost not only includes OpEx (can be considered {the utilized} energy in the whole of the network, i.e., radio and core) but also CaPex cost, especially, the spectrum acquisition cost. Therefore, these approaches have drawbacks for applying {the} total network cost or slice provisioning cost. 

{In order to summarize,} none of them investigate  NFV-RA jointly with the radio RA in an E2E  QoS-aware framework. However, in \cite{9000731}, the authors study a wireless-based NFV in the context of cross-layer resource allocation. This most recent letter optimizes the number of resource blocks that are assigned to SFCs, links, and nodes. 
The objective of the study is to minimize the total delay. As can be inferred, our work has main differences compared to \cite{9000731} in terms of power optimization, scheduling, and considering cost with E2E QoS requirements.
At the same time, the main drawback of the previous works is that they do not consider the cost of the radio and NFVI resources in the total network cost and performance.
Obviously, in a real network, {for} deploying the network slicing, providing E2E QoS is pivotal in user-experience and network performance and adds more challenges than NFV resource orchestration does. The above discussion motivates us to introduce a novel E2E RA including NFV and radio in a cloud-enabled network aiming to minimize total network cost. 
\subsection{\textbf{Contribution and Research Outcome}}
  	 The main contributions of this paper can be summarized as follows:
\begin{itemize}
	\item {We propose a novel E2E QoS-aware framework by taking into account both the radio and NFV-RA  in an unified E2E RA optimization problem.  
	The ultimate goals of the proposed E2E framework are realizing the E2E network slicing and providing cost-efficient E2E QoS for the customers.}
	\item 	We formulate a new optimization problem for radio and NFV-RA called \textbf{j}oint \textbf{r}adio and \textbf{N}FV-\textbf{RA} (JRN-RA) with the aim of minimizing the  E2E network cost in terms of the utilized  radio resources, i.e., power and spectrum, as well as the number of active VMs/servers.
		By this approach,  we can reduce the consumption of power in the active state and at the same time save power in the idle mode (e.g., low power state).
	Hence, it would be a more effective way to achieve more energy saving.
	\item To overcome the restriction on the network resources which may cause the	 infeasibility of the optimization problem,	we devise a novel elastication-based AC algorithm and iterative/multi-stage approach to  solve the proposed optimization problem.
		\item By applying  the iterative approach, we solve iteratively the radio and NFV-RA sub-problems. The NFV-RA  sub-problem is a non-linear integer programming problem. 
			To solve it, we propose a greedy-based low complexity algorithm whose aim is minimizing the number of active VMs/servers. Moreover, we compare its performance with the state of the art works, e.g., \cite{mijumbi2015design}. 
			\item 
		 We assess and compare the performance of the   proposed framework with 
		 \cite{7949048, 8647504} in Section \ref{Bench_Mark} from the defined cost and acceptance ratio point of views. In numerical results, we show that the proposed system model outperforms the state of the art. 
	\item We prove the convergence of the solution of the JRN-RA problem and analyze its computational complexity. Moreover, we investigate the optimality gap of the proposed iterative algorithm. Numerical results reveal that the proposed solution has a polynomial order of complexity with an acceptable optimality gap as $ 13.66$\% on average.
\end{itemize}
\subsection{\textbf{Paper Organization}}
The rest of the paper is outlined as follows.  In Section \ref{systemmodel},  the system model and problem formulation {are} explained. 
The proposed solution is presented in Section \ref{solutions}.  In Section \ref{Complexity-Convergance}, the computational complexity and convergence of our solution are discussed. The simulation results are presented in Section \ref{simulations}. Finally, in Section \ref{conclusions}, the conclusion remarks is inferred.

\textcolor{black}{\textbf{Symbol Notations:} Vector and matrices are indicated by bold
lower-case and upper-case characters, respectively. 
 $\mathcal{A}$ denotes set $\{1,\dots,A\}$, $\mathcal{A}(i)$ is {the} $i$-th element of set $\mathcal{A}$,  and $\Bbb{R}^{n}$ is the set of $n$ dimension real numbers.  Moreover, $U_d[a,~b]$ denotes the uniform distribution in interval $a$ and $b$ and $ |.| $ indicates absolute value.} 
\section{{system Model and problem formulation}} \label{systemmodel}
\textcolor{black}{We consider E2E network of an operator in which it comprises radio access and core network with an access point and some NFV-enabled cloud nodes/servers as shown in Fig. \ref{pic}.  Details of the considered system model and its parameters are stated in two main parts, i.e., access and core network descriptions as follows.
	{It can be  noticed that wireless channel states in the radio domain change rapidly,  while parameters of NFV (e.g., SFC) change relatively slow. In this paper, since  we do not study the long-term optimization, we assume that the network variant parameters in the radio (e.g., channel information) and NFV (e.g., SFC) parts are fixed in our optimization problem as in the existing works \cite{9000731, 7949048}. This means that we solve the optimization problem with the given parameters where some of these parameters. e.g., channel gains and the requested service's SFC, are changed in the adopted Monte Carlo simulation method.}
		}
\begin{figure*}
	\centering
	\centerline{\includegraphics[width=0.65\textwidth]{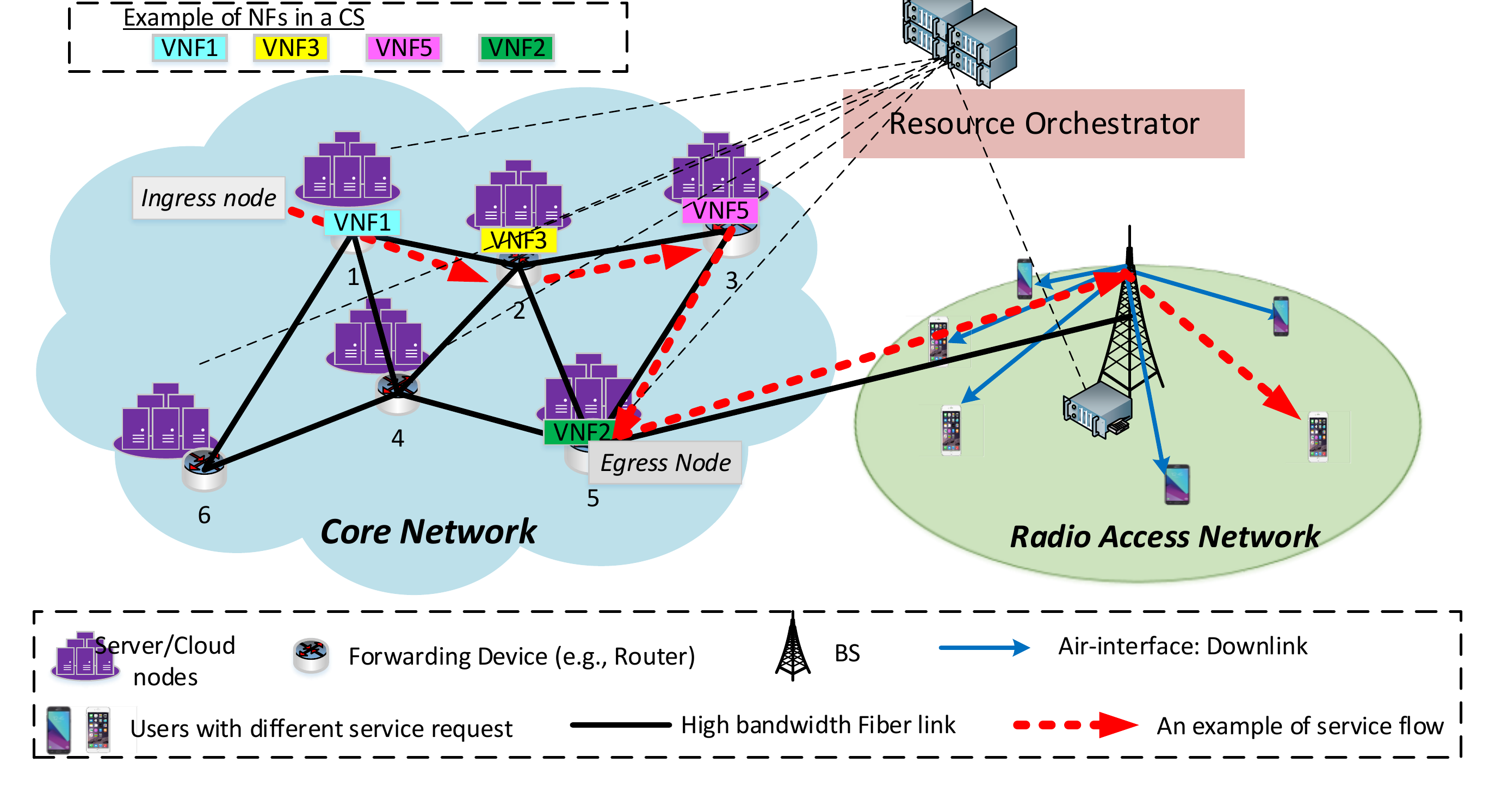}}
	\caption{High level presentation of considered E2E network of a operator  with an example E2E flow  for a user is denoted by red dashed line. All of optimization variables are derived by the entity called resource orchestrator. 
  }
	\label{pic}
\end{figure*}
\subsection{\textbf{Radio Access Network Description}}
We consider a single-cell with  a set $\mathcal{U}=\{1,\dots,U\}$ of $U$ users
and a set     $\mathcal{K}=\{1,\dots,K\}$ of $K$ subcarriers with subcarrier spacing $ B $. 
 We define the subcarrier assignment variable $\rho_{u}^{k}$ with $\rho_{u}^{k}=1$ if subcarrier $k$ is allocated to user $u$ and otherwise $\rho_{u}^{k}=0$. We assume orthogonal frequency division multiple access (OFDMA) as the transmission technology in which each subcarrier is assigned at most to one user. To consider this, the following constraint is \textcolor{black}{introduced}:
\begin{equation} \label{subcons}
\sum_{u \in \mathcal{U}} \rho_{u}^{k} \le 1, \forall k\in\mathcal{K}.
\end{equation}
Let $h_{u}^{k}$ be the channel coefficient between user $u$ and the
BS on subcarrier $k$, $~p_{u}^{k}$ be the transmit power from the BS to user $u$ on subcarrier $k$, and $\sigma_{u}^{k}$ be the power of additive white Gaussian noise (AWGN)\textcolor{black}{\footnote{\textcolor{black}{In this paper, we assume that an AWGN interfering source (IS) interferes at the BS and all users on each subcarrier. We consider a single cell with a BS, in a scenario with many cells and no coordination between BSs, the inter-cell interference distribution converges to a Gaussian and can be integrated into the interference of other cells which can be modeled by the IS \cite{7833146}.}}}  at user $u$ on subcarrier $k$. 
The received signal to noise ratio (SNR) of user $u$ on
subcarrier $k$ is $ \gamma_{u}^{k}=\frac{ p_{u}^{k} h_{u}^{k}}{\sigma_{u}^{k}}$, and the achievable data rate (in bits per second/Hz) of user $u$ on subcarrier $k$ is given by
\begin{align}
r_{u}^{k}= \rho_{u}^{k}\log(1+\gamma_{u}^{k}), \forall u \in \mathcal{U}, k \in \mathcal{K}.\label{Rate_for}
\end{align}
Hence, the total achievable rate of user $u$ is given by
$R_{u}=\sum_{k \in \mathcal{K}}r_{u}^{k},\,\forall u\in\mathcal{U}$.
The power limitation of BS is
$\sum_{k\in \mathcal{K}}  \sum_{u\in \mathcal{U}} \rho_{u}^{k}  p_{u}^{k}  \le P_{\max},$ 
 where $P_{\max}$ is the maximum transmit power of BS.
\subsection{{NFV Environment Description}}
In this subsection, we explain how the generated traffic of  each user is handled in the network by performing different NFs in the requested user's NS\footnote{Defined by European Telecommunications Standards Institute (ETSI) as the composition of Network Function(s) and/or Network Service(s), defined by its functional and
	behavioral specification \cite{ETSIG003}.} on the different servers/physical nodes by leveraging NFV\footnote{Standardized by  ETSI organization for 5G and beyond\cite{etsi2013network}.}. In this regard, we consider NFV-RA that consists of a new approach for the embedding and scheduling phases. In the embedding phase, \textcolor{black}{we map each NF on the server that is capable to run that NF.} \textcolor{black}{Note that we do not consider mapping virtual links on the physical links 
	and leave it as an interesting future work as \cite{mijumbi2015design, yoon2016nfv}.}

We consider $S$ communication service (CS)\footnote{\textcolor{black}{In this paper, the NS and CS are paired together. That means each CS $s$ has a NS with  corresponding NFs that is denoted by set $\Omega_{s}$. Note that CS is defined by the 3rd generation partnership project (3GPP) technical specification 28.530 \cite{28530}.}}
types whose set is $\mathcal{S}=\{1,2, ..., S\}$ and $M$ NFs whose set is  $\mathcal{F} = \{f_{m}~\big|~m=1,\dots,M\}$. The considered parameters of the paper are stated in Table \ref{Table_parameters}.
\begin{table}[t]	\label{Table_parameters}
	\renewcommand{\arraystretch}{1.5}
	\centering
	\caption{Network parameters and notations}
	\begin{adjustbox}{width=.48\textwidth,center}	
	\begin{tabular}{| c| l| }	
		\hline
		\textbf{Notation}& \textbf{Definition}\\\hline
		$\mathcal{U}/U/u$&Set/number/index of users\\ \hline
		$\mathcal{N}/N/n$ &Set/number/index of  VMs\\ \hline
		$\mathcal{F}/F/f$ &Set/number/index of  NFs\\ \hline
		$\mathcal{S}/S/s$ &Set/number/index of  NSs\\ \hline
		$P_{\max}$&Maximum transmit power of the BS	\\\hline
			${\rho_{u}^k}$ &Assignment of subcarrier $k$ to user $u$ \\\hline
			${p_{u}^k}$ &Transmit power of user $u$ on subcarrier $k$ \\\hline
		${h_{u}^k}$ &Channel coefficient between user $u$ and the BS on subcarrier $k$\\\hline
		$\gamma_{u}^k$&SNR of user $u$ on subcarrier $k$\\\hline
		$r_{u}^k$ &Achieved rate of user $u$ on subcarrier $k$\\\hline	
		$y_{u}$ & Packet size of the requested service of user $u$ \\\hline
		$\alpha^{f_{m}^{s}}$, $\psi^{f_{m}^{s}}$& Processing and buffering demand of NF $f_{m}$ in NS $s$, respectively\\\hline
		$\tilde{\tau}_{n}^{f_{m}^{s}}$& Processing latency of NF $f_{m}$ on server $n$ in NS $s$\\\hline
		$L_n, \Upsilon_n$ & Processing and buffering capacity  of VM $n$, respectively\\\hline
		$\eta_{n}$ & Server $n$ activation indicator \\\hline
		$\beta_{u,n}^{f_{m}^{s}}$ & Server mapping between  NF $f_{m}^{s}$ of service $s$ for user $u$, and node $n$\\\hline
		$t_{u,n}^{f_{m}^{s}}$ & Starting time of NF $f_{m}^{s}$ of service $s$ which is requested by user\\& $u$ at node $n$\\\hline
		$x_{u,u'}^{f_{m}^{s},f_{m'}^{s'}}$& Ordering indicator between NF $f_{m}^{s}$  of service $s$ for user $u$ and\\ & NF  $f_{m'}^{s'}$ of service $s'$ for user $u'$\\\hline
	\end{tabular}
\end{adjustbox}
\end{table}	
{
Each CS $s$ is defined by the tuple $\text{CS}_{s}={\Big(}\textit{source node}, \textit{destination node},\Omega_{s},D^{\text{max}}_{s},R^{\min}_{s}{\Big)}$ where $\Omega_{s}$ is the set of NFs which constructs NS $s$ defined by $\Omega_{s}=\left\{f_{m}^{s}\right\},\,m\in\{1,\dots,M\}$, $D^{\text{max}}_{s}$ is the latency constraint for each packet of NS $s$, and $R^{\min}_{s}$ is the minimum required data rate of NS $s$.
We assume that each user can request at most one CS at a time. It can be readily noticed that each CS actually is a slice which includes
\begin{align}
\label{Slice_Re}
\Big(\underbrace{n_s^i, n_s^o,\Omega_{s}}_{\text{SFC of each slice}},\underbrace{D^{\text{max}}_{s},R^{\min}_{s}}_{\text{QoS requirements}}{\Big)}.
\end{align}
where $ n_s^i $ and $ n_s^o $ denote the source and destination nodes of service $ s $.
 Observe that \eqref{Slice_Re} is practical and really forms a slice
which is requested by an end-user compared with \cite{8460139} which considers SFC as a slice. Notably, some users may request the same slice.}

We consider a set of VMs denoted by  $\mathcal{N} = \{1, ..., N\}$ in the network each of which has a limited amount of computing and storage resources.
We assume that each   server can process at most one function at a time \cite{mijumbi2015design}, but it can process any NF \cite{mijumbi2015design}, if capable to run it. This processing approach occurs sequentially for NFs as the time elapses. We consider a generalized model for processor sharing of VMs that is introduced in  \cite{mijumbi2015design}.

To design an energy efficient framework for NFV-environment  in our proposed system, we introduce a new variable $\eta_{n}$ to determine the active nodes, which is defined as 
\begin{equation} \nonumber
\begin{split}
&\eta_{n}= \begin{cases}
1, & \text{Node $n$ is active},\\
0, & \text{Otherwise}.
\end{cases}
\end{split} 
\end{equation}
The goal of our work is to minimize the total number of active VMs/servers in the network. The gain of this approach is not only saving the consumption of the power in the active mode (i.e., under load) but also saving the power of the servers in the idle mode. 

We introduce a binary variable $\beta_{u,n}^{f_{m}^{s}}$ (i.e., VNF-placement variable) which denotes that  NF $f_{m}^{s}$ for user $u$ in NS $s$ is executed at node $n$, and is defined as
\begin{equation} \nonumber
\begin{split}
&\beta_{u,n}^{f_{m}^{s}}= \begin{cases}
1, & \text{NF~ $f_{m}^{s}$~for~$u$~in NS~$s$~ is executed at server~$n$}.\\
0, & \text{Otherwise}.
\end{cases}
\end{split} 
\end{equation}
When $\beta_{u,n}^{f_{m}^{s}}$ is $1$, 
 \textcolor{black}{server $n$} should be active, i.e.,  $\eta_n =1$. Therefore, we have the following constraint:
\begin{equation} \label{sercons}
\beta_{u,n}^{f_{m}^{s}} \le \eta_{n},\, \forall n\in\mathcal{N},\forall u\in\mathcal{U}, \forall f_{m}^{s}\in\Omega_{s}, \forall s\in\mathcal{S}.
\end{equation}
Each NF of each NS is performed completely at only one VM at a time \cite{7490404}. Therefore, we have
\begin{equation} \label{maxfscon}
\sum_{n \in \mathcal{N}}\beta_{u,n}^{f_{m}^{s}} \le 1, \forall u \in \mathcal{U}, f_{m}^{s} \in \Omega_{s}, s\in\mathcal{S}. 
\end{equation}	
\\\indent
	Moreover, we assume  that each NF needs a specific number of CPU cycles per bit, i.e., $\alpha^{f_{m}^{s}}$,  to run on the assigned server. From the physical resource perspective, we assume that each server $n$ can provide at most $L_n$ CPU cycles per unit time, and hence, we have the following constraint:
\begin{equation} 
\label{pcapcons}
\sum_{u \in \mathcal{U}}
\sum_{s\in\mathcal{S}}\sum_{f_{m}^{s}\in\Omega_{s}}
{y}_{u}\alpha^{f_{m}^{s}} \beta_{u,n}^{f_{m}^{s}} \le L_n, \forall n \in \mathcal{N}, 
\end{equation}
where $y_{u}$ is the packet size of the service of user $u$. Here, we assume that the packed size is equal to the number of bits generated in a unit time.
Hence, the elapsed time of each NF $f_{m}^{s}$ for each bit on server $n \in \mathcal{N} $ 
	is obtained as follows: 
\begin{equation} \label{deleq}
\tilde{\tau}_{n}^{f_{m}^{s}}= \frac{\alpha^{f_{m}^{s}}}{L_n},\forall n\in\mathcal{N}, f_{m}^{s} \in \Omega_{s}.
\end{equation} 	 
Therefore, the total processing latency of running  NF $f_{m}^{s}$ on server $ n $ for each packet with packet size 
${y}_{u}$
is obtained as 
\begin{align}\label{Final_Del}
\tau_{n}^{f_{m}^{s}}=	\tilde{\tau}_{n}^{f_{m}^{s}}{y}_{u},\forall n\in\mathcal{N}, f_{m}^{s} \in \Omega_{s}.
\end{align}
\textcolor{black}{Additionally, we assume that each NF needs specific storage size, i.e., $\psi^{f_{m}^{s}}$, when it is running on the server. Moreover, each packet  consumes   $y_{u}$ buffer capacity, when it is waiting for running a NF on an assigned server.  Hence, from the storage and buffer resource perspective, we consider that each server has a limited buffer and storage size, i.e., $\Upsilon_n$, which leads to the following constraint:}
\begin{align}
	&\sum_{u \in \mathcal{U}}\sum_{s\in\mathcal{S}}\sum_{ f_{m}^{s}\in\Omega_{s}} (\psi^{f_{m}^{s}}  +
	{y}_{u} )\beta_{u,n}^{f_{m}^{s}}\le \Upsilon_n ,\label{bufcons}\forall n \in \mathcal{N}.
\end{align}
\subsection{\textbf{Latency Model}}
In NFV-RA, our main aim is to guarantee the service requirement\textcolor{black}{, which} includes maximum tolerable latency  for each packet with size ${y}_{u}$ of the requested services \textcolor{black}{while} minimizing the energy consumption of VMs.  
The total latency that we consider in our system model results from executing NFs and queuing (waiting) time. In the following, we calculate the total latency resulting from scheduling.
\begin{re}
	In this paper, our main aim is to model and investigate the effect of processing and scheduling latency on the service acceptance and the network cost. Hence, we do not consider the other latency factors such as propagation and transmission latency in our model. In fact, our proposed scenario  is focused on intra data center communications  and not appropriate for the national-wide networks.
	It is worth noting that the aforementioned latency is coming from the high order distance from the source and application servers. Therefore, these concernes can be treated by exploiting the mobile edge computing (MEC) and content delivery networks (CDNs) technologies to bring the application servers close to clients \cite{fajardo2015improving}. \textcolor{black}{The extension of this work to MEC-enabled networks is beyond the scope of the current paper, but planned in future works.}
\end{re}
\subsubsection{\textbf{Scheduling and Chaining}}
\textcolor{black}{Each NF should wait until its preceding function is processed before its processing can commence.
The processing of NS $s$ ends when its last function is processed.} Therefore, the total processing time is \textcolor{black}{ the summation of the processing times of the NFs at the various servers. 
	For scheduling of each NF on a server,}  we need to determine the start time of it. Therefore, we define $t_{u,n}^{f_{m}^{s}}$ which is the start time of running NF $f_{m}^s$ of the requested service $s$ for user $u$ on server $n$. Furthermore, we introduce a new variable $x_{u,u'}^{f_{m}^{s},f_{m'}^{s'}},$ 
in which, if NF $f_{m}^{s}$ of user $u$ is running after NF $f_{m'}^{s'}$ of user $u'$, its value is $1$, otherwise is $0$. 
\textcolor{black}{By these definitions, the starting time of each NF can be obtained as follows:
	\begin{align}
	\label{schule}
	& t_{u,n}^{f_{m}^{s}}\beta_{u,n}^{f_{m}^{s}}\ge  \max \Bigg\{ \mathop {\max }\limits_{\scriptstyle\forall f_{m'}^{s'}\in\Omega_{s'},
		\scriptstyle u'\in\mathcal{U}} \left\{ {x_{u,u'}^{f_m^{s},f_{m'}^{s'}}\beta _{u',n}^{f_{m'}^{s'}}(t_{u',n}^{f_{m'}^{s'}} + \tau _n^{f_{m'}^{s'}})} \right\},
	\nonumber\\ &
	\mathop {\max }\limits_{\scriptstyle\forall f_{m''}^s\in \{{\Omega _s}-f_{m}^{s}\},
		\scriptstyle  n'\in \{\mathcal{N}-n\} } \left\{ {x_{u,u}^{f_{m}^{s},f_{m''}^{s}}\beta _{u,n'}^{f_{m''}^s}(t_{u,n'}^{f_{m''}^{s}} + \tau _{n'}^{f_{m''}^{s}})} \right\}\Bigg\},
	\nonumber\\&
	\forall f_m^s \in {\Omega _s}, f_{m'}^{s'}\in{\Omega_{s'}},{\mkern 1mu}  \forall s, s'\in\mathcal{S}, \forall n \in \mathcal{N},\, \forall u\in \mathcal{U}. 
	\end{align}}
 
	{
		To demonstrate  how we formulate the scheduling of NFs, the proposed scheduling policy is illustrated  in Fig. \ref{Scheduling}.  
This figure is the state of the network  assuming two NSs each of which consists of some NFs  and five VMs.
} The processing time of each NF, $\tau_{n}^{f^{s}_{m}}$ is obtained by \eqref{Final_Del}, i.e., $\tau_{5}^{f^{2}_{1}}=\frac{\tilde{\tau}^{f^{2}_{1}}y_{u}}{L_{5}}$. Each NF has a start time (denoted by $t_{u,n}^{f_{m}^{s}}$) to run on an assigned VM/server and elapsed processing time (denoted by $\tau_{n}^{f^{s}_{m}}$) and 
is completed by the time  given by $t_{u,n}^{f_{m}^{s}}+\tau_{n}^{f^{s}_{m}}$ on server $n$. As can be seen, 
VMs $3$ and $1$ are off, since based on our aim and solution algorithm, three VMs from five VMs are sufficient to ensure the requested requirements.	
%
%
%
	\begin{figure*}[t]
		\centering
		\centerline{\includegraphics[width=0.75\textwidth]{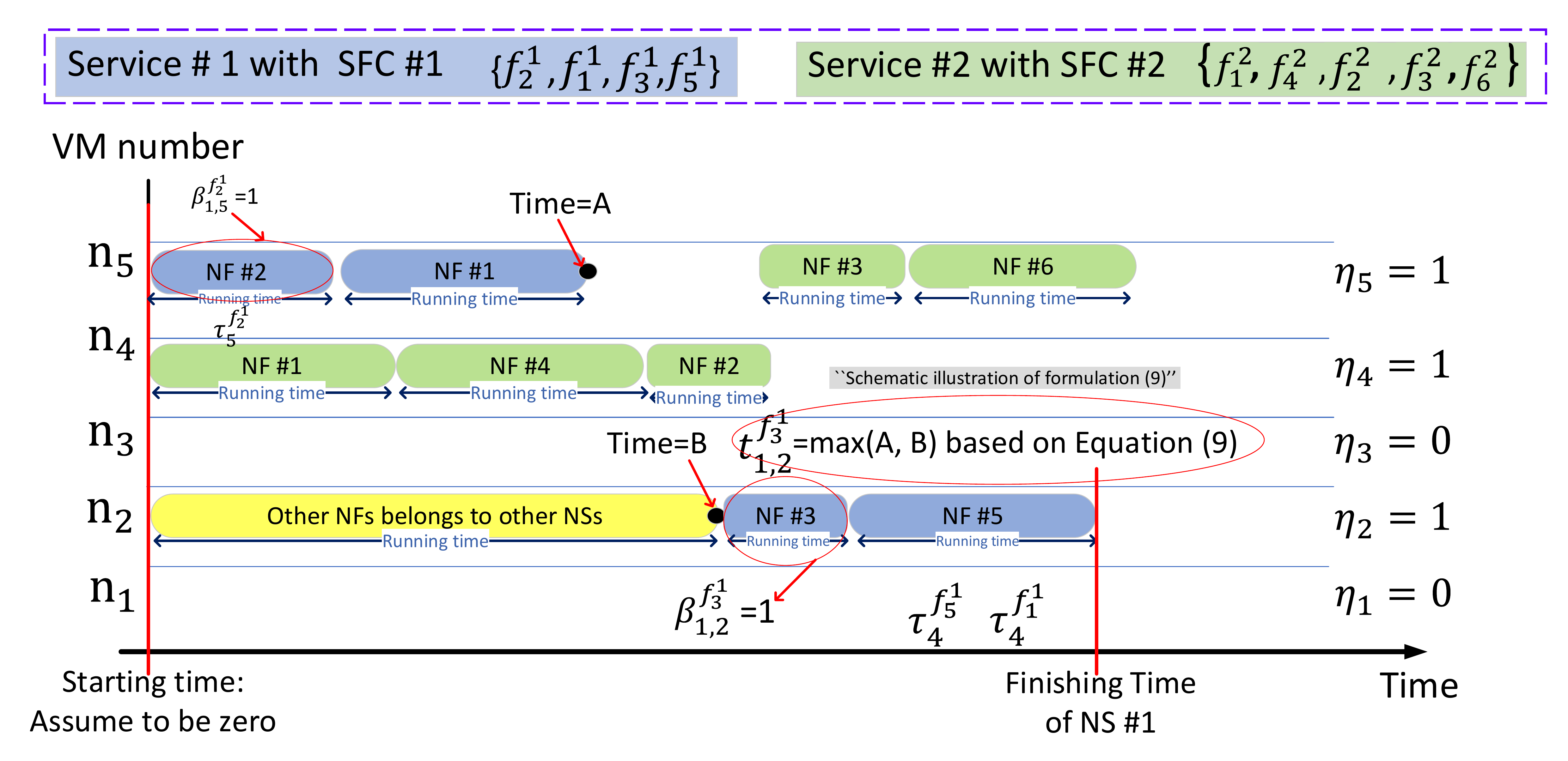}}
		\vspace{-1em}
		\caption{\textcolor{black}{Schematic illustration of the proposed scheduling and formulation of  \eqref{schule}.}}
		\label{Scheduling}
	\end{figure*}
	
	{It is worth noting that our problem is performed for a snapshot assuming all the packets of the services which are generated in unit time are fetched into the network at the beginning of each unit time.
	%
		  Hence, the arrival time of all packets is the same and can be set to zero.
		   Therefore, 
		the total service chain latency for each user $u$ on the requested service is inferred as follows \cite{game-theroy}:
	}
\begin{align}
&D^{\text{Total}}_{u}=\max_{\forall n\in\mathcal{N}, f_{m}^{s}\in\Omega_{s},s\in\mathcal{S}}\Big\{{t_{u,n}^{f_{m}^{s}}}\beta_{u,n}^{f_{m}^{s}}+\tau_{n}^{f_{m}^{s}}\beta_{u,n}^{f_{m}^{s}} \Big \}, 
\forall u \in \mathcal{U}.\label{maxd}
\end{align}
\subsection{\textbf{Cost Model: Objective Function}}
Our aim is to minimize the total cost of the network. In this regard, we define cost $\Psi$ as
the total amount of radio and NFV resources that are utilized in the network to provide services.
{In particular, the cost function is given as follows:  
	\begin{align}
	\Psi(\bold{P}, \boldsymbol{\rho}, \boldsymbol{\eta})=\mu_1 \sum_{u\in \mathcal{U},k\in \mathcal{K}}p_{u}^{k}+\mu_2\sum_{u\in \mathcal{U},k\in \mathcal{K}}B\cdot\rho_{u}^{k}+\mu_3\sum_{n\in \mathcal{N}}\eta_{n},
	\end{align}
	where  $\mu_1,\mu_2,\mu_3\ge0$ are constants with $\mu_1+\mu_2+\mu_3=1$ and are used for scaling and balancing the costs of different resource types. Notably, the units of these parameters are, respectively $ \$$/Watts, $ \$$/KHz, and $ \$$ for $\mu_1 $, $\mu_2$, and $\mu_3 $. Therefore, the unit of cost function is in $ \$ $.}
\subsection{\textbf{Problem Formulation}}  \label{Problemformulation}
Based on these definitions, our aim is to solve the following JRN-RA optimization problem: 
\begin{subequations}\label{1main}
	\begin{align}\label{op11}
		&\min_{\bold {P}, \boldsymbol{\rho},\bold{T},\boldsymbol{X}, \boldsymbol{\mathcal{\beta}}, \boldsymbol{\eta}}
	\Psi (\bold {P},\boldsymbol{\rho},\boldsymbol{\eta}),
	\\
		\textbf{ s.t:~~}
	\label{1maxr}
&	R_{u} \ge R^{\min}_{u},\, \forall u\in\mathcal{U} ,
	\\	&	\label{1rho}
	\sum_{u \in \mathcal{U}} \rho_{u}^{k} \le 1  ,\forall k\in\mathcal{K},
	\\&
	\label{1power}
	\sum_{k\in \mathcal{K}}  \sum_{u\in \mathcal{U}} \rho_{u}^{k}  p_{u}^{k}  \le P_{\max},
	\\&
	\label{1capcons}
	\sum_{u \in \mathcal{U}}	\sum_{s\in\mathcal{S}}\sum_{f_{m}^{s}\in\Omega_{s}}
	{y}_{u}\alpha^{f_{m}^{s}} \beta_{u,n}^{f_{m}^{s}} \le L_n, \forall n \in \mathcal{N}, 
	\\&
	\sum_{u \in \mathcal{U}}\sum_{s\in\mathcal{S}}\sum_{\forall f_{m}^{s}\in\Omega_{s}} \big(\psi^{f_{m}^{s}}  +
	{y}_{u}\big)\beta_{u,n}^{f_{m}^{s}}\le \Upsilon_n,\forall n \in \mathcal{N},\label{1bufcons}	 	
	\\&
	\label{1schul}
	t_{u,n}^{f_{m}^{s}}\beta_{u,n}^{f_{m}^{s}}\ge\nonumber\\&  \max \Bigg\{ \mathop {\max }\limits_{\scriptstyle\forall f_{m'}^{s'}\in\Omega_{s'},
		\scriptstyle u'\in\mathcal{U}} \left\{ {x_{u,u'}^{f_m^{s},f_{m'}^{s'}}\beta _{u',n}^{f_{m'}^{s'}}(t_{u',n}^{f_{m'}^{s'}} + \tau _n^{f_{m'}^{s'}})} \right\},
	\nonumber\\ &\mathop {\max }\limits_{\begin{array}{c}
	\scriptstyle{\forall f_{m''}^s\in \{{\Omega _s}-f_{m}^{s}\}}\\
		\scriptstyle  {n'\in \{\mathcal{N}-n\}}
		\end{array} } \left\{ {x_{u,u}^{f_{m}^{s},f_{m''}^{s}}\beta _{u,n'}^{f_{m''}^s}(t_{u,n'}^{f_{m''}^{s}} + \tau _{n'}^{f_{m''}^{s}})} \right\}\Bigg\},\nonumber\\&\forall f_m^s \in {\Omega _s}, f_{m'}^{s'}\in{\Omega_{s'}},{\mkern 1mu}  \forall s, s'\in\mathcal{S}, \forall n \in \mathcal{N},\, \forall u\in \mathcal{U},
	\\&
	\label{1maxd}
	D^{\text{Total}}_{u} \le D^{\text{max}}_{s},~\forall u\in\mathcal{U},
	\\&	 
	\label{1powerp}
	0\le p_{u}^{k},~\forall u\in\mathcal{U},~k\in\mathcal{K},
	\\  & 
	\label{1sercons}
	\beta_{u,n}^{f_{m}^{s}} \le \eta_{n},\, \forall n\in\mathcal{N},\forall u\in\mathcal{U}, f_{m}^{s} \in \Omega_{s},
	\\&
	\label{1maxfcons}
	\sum_{n \in \mathcal{N}}\beta_{u,n}^{f_{m}^{s}} \le 1, \forall u \in \mathcal{U}, f_{m}^{s} \in \Omega_{s}, s\in\mathcal{S},
	\\&
	\label{1rhod}
	\rho_{u}^{k}\in \{0,1\},~\forall u\in\mathcal{U},~k\in\mathcal{K},\\&
	\label{1betad}
	\beta_{u,n}^{f_{m}^{s}} \in \{0,1\},\forall u\in\mathcal{U},\,\forall f_{m}^{s}\in\Omega_{s},\forall s\in\mathcal{S},
	\\& 
	\label{1xd}
	x_{u,u'}^{f_{m}^{s},f_{m'}^{s'}} \in \{0,1\},\forall u,u'\in\mathcal{U},u\neq u', \forall f_{m}^{s},\forall f_{m'}^{s'}\in\Omega_{s'},
	\\& 
	\label{1etad}
	\eta_n \in \{0,1\},\forall n\in\mathcal{N},
	\end{align}
\end{subequations}
where $\boldsymbol{\rho}=[\rho_{u}^{k}]$, $\boldsymbol{\mathcal{\beta}}=[\beta_{u,n}^{f_{m}^{s}}]$, $\bold{T}=[t_{u,n}^{f_{m}^{s}}]$, $\bold{{P}}=[p_{u}^{k}], $  $\boldsymbol{X}=[x_{u,u'}^{f_{m}^{s},f_{m'}^{s'}}]$, and  $\boldsymbol{\eta}=[\eta_{n}] $. In problem \eqref{1main},  constraint 
\eqref{1maxr} ensures the minimum rate requirement, \eqref{1rho}  guarantees that each subcarrier is assigned \textcolor{black}{to at most one user}, \eqref{1power} is \textcolor{black}{the} transmit power constraint. Moreover, constraint \eqref{1schul} \textcolor{black}{determines the} scheduling  principle, constraint \eqref{1capcons}  guarantees the processing requirement for each NF with the corresponding packet size to run on the server, and \eqref{1bufcons} indicates \textcolor{black}{the} storage capacity requirement for both buffering and running NFs. \textcolor{black}{Constraints \eqref{1rhod}-\eqref{1etad} are  for binary variables.}
\section{\textbf{solution algorithm}}\label{solutions}
Optimization problem \eqref{1main} is non-convex including both mixed binary and continues variables with non-linear and non-convex constraints. \textcolor{black}{
	Hence, it is an intractable optimization problem obtaining whose optimal solution  requires high computational complexity and {time}
%
\cite{qu2016delay, nejad2018vspace}}. {Therefore, we intend to develop an algorithm to reach a polynomial order of complexity with local optimum. The adopted algorithm is based on the iterative decomposition method.
}

Without considering NFV-RA, 
\textcolor{black}{the radio RA problem, separately, on the power and subcarrier allocation variables is convex optimization problem, and hence, each of them can be solved efficiently.} While NFV-RA is MINLP with \textcolor{black}{large number of} variables, i.e., $\bold{T},\boldsymbol{X}, \boldsymbol{\eta}, \boldsymbol{\mathcal{\beta}}$. 
These motivate us to develop a new low complexity heuristic algorithm to solve NFV-RA sub-problem that is stated with details in Algorithm \ref{ALG_NFV}.  

\textcolor{black}{Since the optimization problem \eqref{1main} \textcolor{black}{can} be infeasible,} we propose a novel AC algorithm (see Section \ref{admissionc}) based on elasticization method by introducing a new elastic variable. In order to briefly explain of the elasticization method\footnote{Further information can be found in [Section 6.1.4 \cite{chinneck2007feasibility}]\cite{ebrahimi2019joint}.}, assume that we have a constraint 
$g({\bold{y}})\le {0}$, where $\bold{y}\in\Bbb{R}^{n}$ is the objective variable. We elasticize it by 
$g({\bold{y}})\le{A}$, where ${A}\ge{0}$ is the objective variable. Based on this method, the constraints that \textcolor{black}{would make the} JRN-RA optimization problem infeasible are changed as follows.
\textcolor{black}{By applying this method, we reformulate \eqref{1main}  as follows:}
\begin{subequations}\label{mainelas}
	\begin{align}
	\min_{\bold {P}, \boldsymbol{\rho},\bold{T},\boldsymbol{X}, \boldsymbol{\eta}, \boldsymbol{\mathcal{\beta}},{A}}
	&\Psi(\bold {P},\boldsymbol{\rho},\boldsymbol{\eta})+ W\cdot A,
	\\
	\textbf{ s.t:~}~&\label{elasrmax}
	R^{\min}_{u}-R_{u}\le A,\, \forall u\in\mathcal{U},\\		                         
	&\sum_{u \in \mathcal{U}}\sum_{s\in\mathcal{S}}\sum_{\forall f_{m}^{s}\in\Omega_{s}} [\psi^{f_{m}^{s}}+y_u]\cdot  \beta_{u,n}^{f_{m}^{s}}
	\nonumber\\&
	-\Upsilon_n\le A,\forall n \in \mathcal{N},\label{elasbuf}
	\\&
	\sum_{u \in \mathcal{U}}	\sum_{s\in\mathcal{S}}\sum_{f_{m}^{s}\in\Omega_{s}}{y}_{u}\alpha^{f_{m}^{s}} \beta_{u,n}^{f_{m}^{s}}			
	-L_n\le A,
	\\&\label{elaslatency}
	D^{\text{Total}}_{u} -D^{\text{max}}_{s}\le A ,\forall u\in\mathcal{U},
	\\&\label{elasvar}
	A\ge 0,
	\\&
	\eqref{1rho},~	\eqref{1power},~\eqref{1sercons},~\eqref{1maxfcons},~	\eqref{1schul},~\eqref{1powerp}-\eqref{1etad},\nonumber
	\end{align} 
\end{subequations}
where 
$A$ is the elastic variable and $W$ is a large positive number, i.e., $W\gg1$. 
\textcolor{black}{Note that since $A$ can be any non-negative value, the optimization problem \eqref{mainelas} is feasible. }
By solving the \textcolor{black}{optimization} problem \eqref{mainelas}, the  infeasibility of the main optimization problem \eqref{1main} is determined. Therefore,  if \textcolor{black}{the} elastic variable \textcolor{black}{$A$} is positive, problem \eqref{1main} is infeasible.
\textcolor{black}{To overcome the infeasibility of problem \eqref{1main}}, we introduce a new AC method \textcolor{black}{to reject some services providing rooms for the remaining ones}. 
In fact, our proposed solution of problem \eqref{1main}, namely, elasticization-based AC with ASM (E-AC-ASM), has three main steps; 1) elasticization: in this step the constraints  which make  problem \eqref{1main}  infeasible, are elasticized  2) solving problem \eqref{mainelas} with adopting ASM (see Algorithm \ref{ASM_algorithm}); 3) performing AC: in this step, based on the value of elastic variable, when $A$ is positive, we perform AC. The summary of the mentioned steps can be followed in Fig. \ref{1main}.
\textcolor{black}{	The block diagram illustrating the details of  E-AC-ASM to \textcolor{black}{solve the optimization} problem \eqref{1main} 
	is shown in Fig. \ref{SAD}.}
\subsection{\textbf{Admission Control}}\label{admissionc}
Our proposed AC is based on the value of elastic variable of problem \eqref{mainelas}. Whereas, if  $A$ is non-zero, the original problem \eqref{1main} is infeasible. This means \textcolor{black}{that} one or more elasticated  \textcolor{black}{constraints, i.e., \eqref{elasrmax}-\eqref{elaslatency}, are not satisfied.} 
To ensure these constraints, we can increase network resources (e.g., server's capacities) or reject some of \textcolor{black}{the users service requests. Since} the first method is not practical in more cases, we propose to reject some requested services by adopting the proposed AC. \textcolor{black}{ One of the major question in devising AC is} which one of the requested services \textcolor{black}{should be} rejected. In this case, the requested services have diverse characteristics and different \textcolor{black}{effects} on the \textcolor{black}{utilization of the} network resources, and consequently on the  infeasibility of problem \eqref{1main}. \textcolor{black}{To find the user which has the most effects on the infeasibility and reject its service, we 
	do as follows:}
\begin{align}\label{ADP}
& u^{\star}= \text{argmax}_{u} 
\Gamma_{u}\triangleq\kappa_1\Big|R^{\min}_{s}-R_u\Big|
+\nonumber\\&
\kappa_2\Big|\sum_{n \in \mathcal{N}}\Big(\sum_{s\in\mathcal{S}}\sum_{\forall f_{m}^{s}\in\Omega_{s}} [\psi^{f_{m}^{s}}  +
{y}_{u}]\cdot \beta_{u,n}^{f_{m}^{s}} 
-\Upsilon_n\Big)\Big|
\nonumber\\&+\kappa_3
\Big|	\sum_{n\in\mathcal{N}}	\sum_{s\in\mathcal{S}}\sum_{f_{m}^{s}\in\Omega_{s}}{y}_{u}\alpha^{f_{m}^{s}} \beta_{u,n}^{f_{m}^{s}}-L_n\Big|,
\end{align}
{where $\kappa_1\ge 0$  per bps, $\kappa_2\ge 0$, per bit  and $\kappa_3\ge 0$ per CPU cycle per second are the fitting  parameters to balance $\Gamma_{u}$ with units $  $, respectively. We emphasize that \eqref{ADP} calculates the gap between configuration values (e.g., minimum date rate and VM capacity) and feasible values.
}
 Moreover, in \eqref{ADP} we use the values of the optimization variables of \eqref{mainelas} obtained by Algorithm \ref{ASM_algorithm}. Based on this, we reject user $u^\star$. Then, solve problem \eqref{mainelas} with $\mathcal{U}'=\mathcal{U}-\{u^\star\}$. We repeat this procedure until, we have $A=0$ in the solution of problem \eqref{mainelas}.

\textcolor{black}{The re-formulated problem \eqref{mainelas} is also non-convex and intractable. In this regard, we solve} \textcolor{black}{it} by dividing it into  three sub-problems \textcolor{black}{by} utilizing ASM. \textcolor{black}{The} first sub-problem is  power allocation and elasticization,  \textcolor{black}{the} second \textcolor{black}{one} is subcarrier allocation, and \textcolor{black}{the last one}  is NFV-RA.
	In fact, the first and second sub-problems are the radio RA sub-problem and it is stated in Section \ref{RRASP}.
	In the NFV-RA sub-problem, all \textcolor{black}{the} optimization variables are integer and the problem formulation and solution are presented in Section \ref{NFVRASP}.  More details of the iterative solution of optimization problem \eqref{mainelas} are stated in Algorithm \ref{ASM_algorithm}. Moreover, we investigate the E-AC-ASM algorithm \textcolor{black}{from different aspects, namely, complexity, convergence and performance, and compare it with other existing methods. In the next subsection, we explain the solution of the aforementioned sub-problems. }
\begin{figure*}[t]
	\centering
	\centerline{\includegraphics[width=0.95\textwidth]{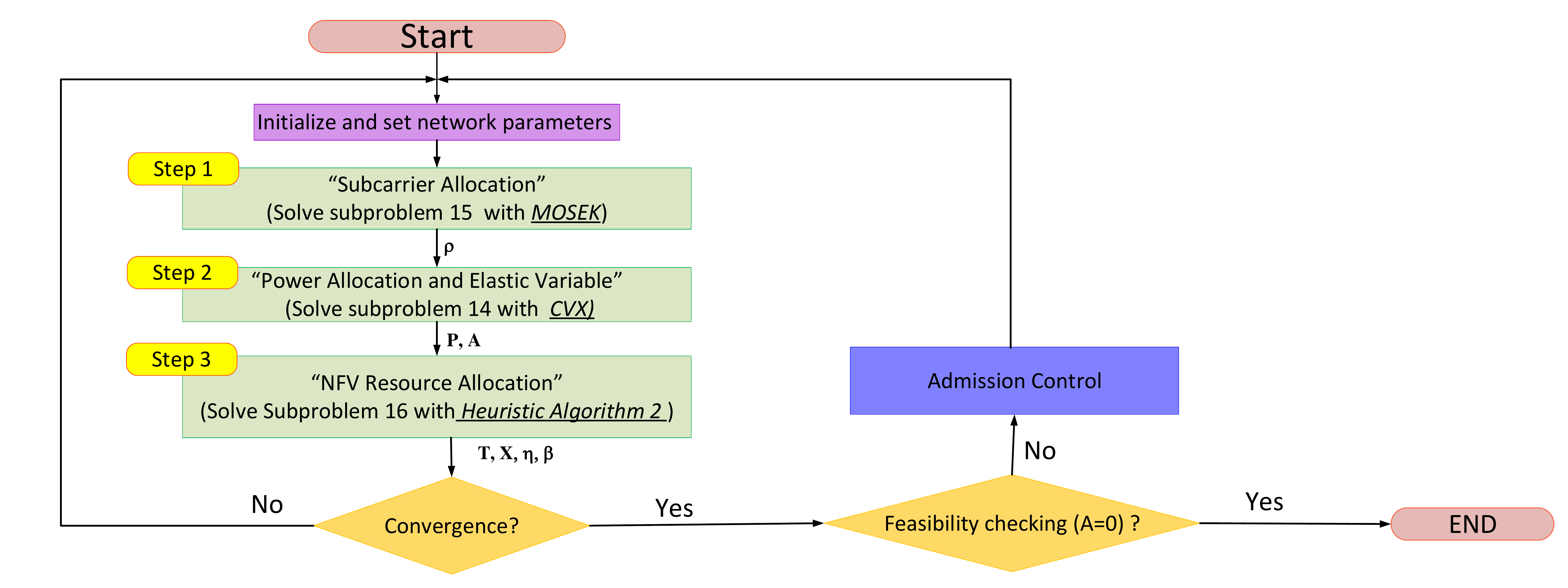}}
	\vspace{-1em}
	\caption{Flowchart of the E-AC-ASM algorithm  for solving  main problem \eqref{1main}.}
	\label{SAD}
\end{figure*}
\begin{algorithm}[t]
		\DontPrintSemicolon
		\label{ASM_algorithm}
	\renewcommand{\arraystretch}{0.9}
	\caption{ 
		Iterative  RA for solving problem \eqref{mainelas}}
	\KwInput{  $\epsilon_{\text{TH}}=10^{-4}$, $Z_{\text{TH}}=30$, $z=0$, $\bold{P}^{(z)}=[\frac{P_{\max}}{U\times K}]$,
		$\bold{T}^{(z)}=[t^0]$, $\boldsymbol{X}^{(z)}=[X^0]$, 
		$\boldsymbol{\beta}^{(z)}=[\beta^0]$, and $\boldsymbol{\eta}^{(z)}=[\eta^{0}]$}
	
		\Repeat{
				$|({\Psi}+WA)^{(z)}-({\Psi}+WA)^{(z-1)}|\le\epsilon_{\text{TH}}$ or  $z\geq Z_{\text{TH}}$}
		{
	\textbf{Step 1:} 	obtain subcarrier assignment variable, i.e., $\boldsymbol{\rho}$ by solving \eqref{submain}
		
	\textbf{Step 2:}	obtain power allocation variable, i.e., $\bold{P}$,  and elastic variable, i.e., $A$, by solving sub-problem \eqref{powerallocation}
		
	\textbf{Step 3:}	obtain NFV-RA variables, i.e., $ \bold{T},\boldsymbol{X}, \boldsymbol{\eta},$ and $\boldsymbol{\mathcal{\beta}}$
		by solving sub-problem \eqref{NFVP} by Algorithm \ref{ALG_NFV} 
		
		$z=z+1$
	}
\KwOutput{ $\boldsymbol{\rho}, \bold{P}, A, \bold{T},\boldsymbol{X}, \boldsymbol{\eta},$ and $\boldsymbol{\mathcal{\beta}}$}
\end{algorithm}

\vspace{-1em}
\subsection{\textbf{Radio Access Network  RA}}\label{RRASP}
The radio RA problem is divided into two sub-problems as follows.
\subsubsection{\textbf{Power Allocation and Elasticated Sub-problem}}
The power allocation and  elasticated sub-problem is presented as follows:
\begin{subequations}\label{powerallocation}
	\begin{align}
	\min_{\bold{P},A}&\sum_{u \in \mathcal{U}}\sum_{k \in \mathcal{K}}p_{u}^{k}+W\cdot A, 
	\\	\textbf{ s.t.:}~
\nonumber&\eqref{elasrmax}-\eqref{elasvar},~\eqref{1power},~\eqref{1powerp}.
	\end{align}
\end{subequations}
\textcolor{black}{Sub-problem} \eqref{powerallocation} is convex. Hence, it can be \textcolor{black}{solved} efficiently by using the interior point method (IPM) with CVX toolbox in MATLAB \cite{grant2006matlab}.
\subsubsection{\textbf{Subcarrier Allocation Sub-problem}}
The subcarrier allocation sub-problem is as follows:
\begin{subequations}\label{submain}
	\begin{align} 
	\min_{\boldsymbol{\rho}}& \sum_{u \in \mathcal{U}}\sum_{k \in \mathcal{K}}\rho_{u}^{k},
	\\\nonumber	\textbf{s.t:}~&\eqref{elasrmax},\,\eqref{1rho},\,\eqref{1power},\,\eqref{1rhod}. 
	\end{align}	
\end{subequations}
\textcolor{black}{Sub-problem \eqref{submain} is an integer linear programming problem, which can be solved by using binary convex optimization solver MOSEK \cite{mosek2015mosek}.}
\subsection{\textbf{NFV-RA}}\label{NFVRASP}
The NFV-RA sub-problem is as follows:
\begin{subequations}\label{NFVP}
	\begin{align}
	\min_{\bold{T},\boldsymbol{X}, \boldsymbol{\eta}, \boldsymbol{\mathcal{\beta}}}&
	\sum_{n \in \mathcal{N}} \eta_{n}
	\\
	\textbf{ s.t:}~~&\nonumber
	\eqref{elasbuf}-\eqref{elaslatency},~\eqref{1schul},~\eqref{1sercons}-\eqref{1etad}.
	\end{align}	
\end{subequations}
\textcolor{black}{To solve sub-problem \eqref{NFVP}, we propose a new  greedy-based algorithm as a heuristic algorithm
	because \eqref{NFVP} is non-convex with large number of variables.
		We map and schedule the functions on the servers to have the minimum processing latency based on the greedy criterion.}
	To this end, we ascendingly sort the servers by the total processing latency metric (greedy criterion).  After that, the server with the best rank, i.e., the highest available capacity in the sorted list, is turned on. Then, \textcolor{black}{either} we activate another server, if the previously activated servers cannot satisfy the resource demands by NFs or \textcolor{black}{we degrade the QoS of the users. Hence, our proposed algorithm is based on minimizing the number of active servers.}
Based on the algorithm, we ascendingly sort users according to latency requirements and then we start to map and schedule each of NFs on the servers. The details of the proposed NFV-RA are stated in Algorithm \ref{ALG_NFV}. 
\begin{algorithm} 
	\small
	\caption{Proposed greedy-based heuristic NFV-RA algorithm for solving sub-problem \eqref{NFVP}} 
	\label{ALG_NFV}
		\KwInput{ Set system configuration parameters: {$S,F$}, {$\Omega_{s}$}, {$\alpha^{f_{m}^{s}}$}, $\rho^{f_{m}^{s}}$} 
		
		  Sort in ascending order servers according to the total latency of NFs on these servers and write server's index in $\tilde{\mathcal{N}}$ (e.g., $\mathcal{N}=\{1,\dots,4\}\rightarrow$$\tilde{\mathcal{N}}=\{3,4,1,2\}$ for $N=4$)
		 
		 Sort in ascending order all users according to the value of $D_{u}^{\max}$ and write users's index in $\tilde{\mathcal{U}}$ (e.g.,  $\tilde{\mathcal{U}}=\{2,4,1,3,5\}$ for $U=5$)
		 	
		 \For {$\tilde{n}$=1: $\tilde{{N}}$ }
		{
		 Add $\tilde{\mathcal{N}}(\tilde{n})$ to set $\hat{\mathcal{N}}_{\text{Used}}$ (\textit{Servers are in this set are activated})
		
		 $\bold{t'}=[\bold{0}]_{1,\tilde{n}}$ (\textit{Servers in this set are activated}) \&
		 $\eta_{\tilde{n}}=1$
		 
		 \For { $\tilde{u}$=1: $\tilde{U}$}
		 {
		 \For  {$m$=1:$|\Omega_{s}|$}
		{
		 \If {$m \ge 2$}
		 {
		 \For {$\hat n$=1: $\hat{\mathcal{N}}_{\text{Used}}$  }
		 {
		 $\tilde {t}_{u,\tilde n}^{f_{m}^{s}}=\text{max}\{t_{u,n}^{f_{m-1}^{s}},t'(\tilde n)\}+\tau_{\tilde n}^{f_{m}^{s}}$
		}
	}
		 \Else
		 {
		%
		
		 \For {$\hat n$=1: $\hat{{N}}_{\text{Used}}$ }
		 {
		 $\tilde {t}_{u,\hat n}^{f_{m}^{s}}=t'(\hat n)+\tau_{\hat n}^{f_{m}^{s}}$
		}
		}
	
		%
		 Find server $n$ in set $\hat{\mathcal{N}}_{\text{Used}}$ which has lowest $\tilde {t}_{u,n}^{f_{m}^{s}}$ (greedy criteria)
		
		 $\beta_{u,n}^{f_{m}^{s}}=1$ \&		
		 Calculate $\phi_{u}^{f^{s}_{m}} =\sum_{n'\in\mathcal{N}}\beta_{u,n'}^{f_{m'}^{s}} (t_{u,n'}^{f_{m'}^{s}}+\tau_{n'}^{f_{m'}^{s}})$
		
		 \For  { $u' \in\mathcal{U}, u' \neq u$ \&  $f_{m'}^{s'} \in \Omega_{u',s'}$}
		 {
		 \If {$\beta_{u',n}^{f_{m}^{s'}}=1$ \& $\phi_{u}^{f^{s}_{m}} \ge t_{u',n}^{f_{m'}^{s'}}$}
		 {
		 $x_{u,u'}^{f_{m}^{s},f_{m'}^{s'}}=1$}
		}
		}
		 
		 Calculate $t_{u,n}^{f_{m}^{s}}$ according to constraint \eqref{schule}
		
		 $t'(n)=t_{u,n}^{f_{m}^{s}}$
	}

		 \If {\eqref{elasbuf}-\eqref{elaslatency} are satisfied} 
		 {
	     
	     	\textbf{Break}
	}
		 \Else    
		 { Return to line 5} 
		\vspace{-0.5em}
		}
		%
			\KwOutput{ $\boldsymbol{T}$, $\boldsymbol{X}$, $\boldsymbol{\eta}$, and $\boldsymbol{\beta}$}
\end{algorithm}
\section{\textbf{Convergence and Computational Complexity}} \label{Complexity-Convergance}
\subsection{\textbf{Convergence of the Solution Algorithm}}
\textcolor{black}{Based on ASM, after each iteration, the objective
function in each sub-problem is enhanced and finally it}
converges. Fig. \ref{ASM} shows an example \textcolor{black}{about the convergence of our proposed iterative algorithm. Clearly, it converges after  few iterations. } 	
\begin{figure}[t] 
	\centering
	\centerline{\includegraphics[width=0.42\textwidth]{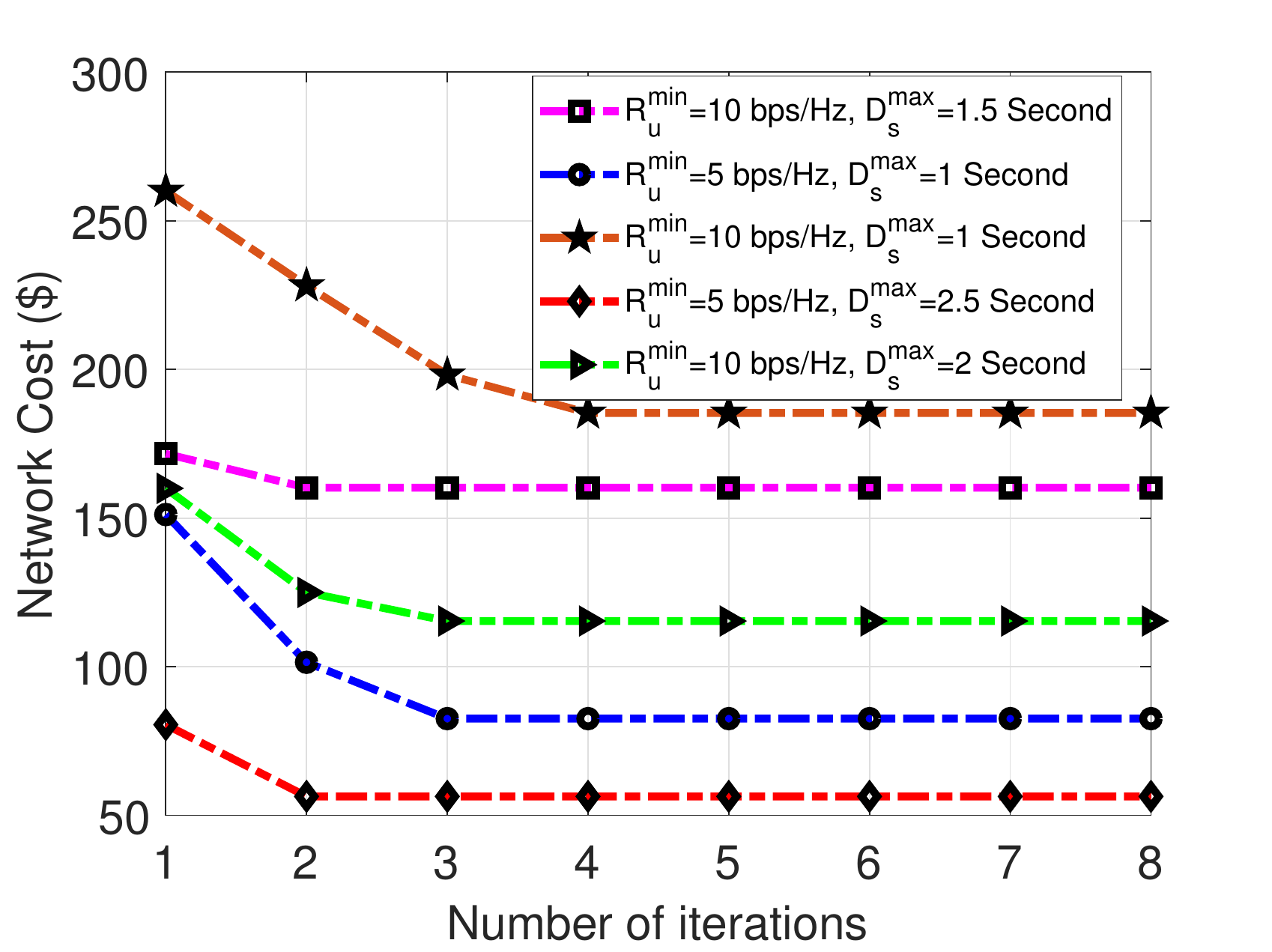}}
	\vspace{-1em}
	\caption{An examples of convergence of iterative solution with ($U=30, F=15, S=10, \text{and}~ N=20$) and other parameters are based on Table \ref{Core Network parameter setting} and \ref{setting}.}
	\vspace{-1em}
	\label{ASM}
\end{figure}
\begin{pro}
	With a feasible initialization of problem \eqref{mainelas}, the ASM algorithm converges to a \textcolor{black}{sub-optimal solution}.
\end{pro}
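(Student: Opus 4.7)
The plan is to establish convergence through the classical monotone-bounded argument that underpins alternating search / block coordinate descent schemes. First I would denote the joint objective in \eqref{mainelas} by $\Phi^{(z)} = \Psi(\mathbf{P}^{(z)},\boldsymbol{\rho}^{(z)},\boldsymbol{\eta}^{(z)}) + W\cdot A^{(z)}$ and show that the sequence $\{\Phi^{(z)}\}_{z\ge 0}$ is non-increasing. The feasible initialization ensures $\Phi^{(0)}<\infty$, and since every summand in $\Psi$ is non-negative and $A\ge 0$ by \eqref{elasvar}, the sequence is bounded below by $0$. The monotone convergence theorem then yields $\Phi^{(z)}\to \Phi^{\star}\ge 0$, which combined with the stopping criterion $|\Phi^{(z)}-\Phi^{(z-1)}|\le\epsilon_{\text{TH}}$ gives the desired convergence in finitely many iterations.

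To verify the monotone descent, I would argue block-by-block within iteration $z$ of Algorithm \ref{ASM_algorithm}. In Step~1 the subcarrier assignment $\boldsymbol{\rho}$ is obtained by solving the integer linear program \eqref{submain} with MOSEK; because the previous value $\boldsymbol{\rho}^{(z-1)}$ remains feasible when the other blocks are fixed, the new solution cannot increase $\Phi$. In Step~2 the convex sub-problem \eqref{powerallocation} is solved to global optimality via IPM/CVX on the joint variables $(\mathbf{P},A)$, so again $\Phi$ cannot increase. For Step~3, I would argue that even though Algorithm \ref{ALG_NFV} is a greedy heuristic, by feeding the previous iterate $(\bold{T}^{(z-1)},\boldsymbol{X}^{(z-1)},\boldsymbol{\eta}^{(z-1)},\boldsymbol{\beta}^{(z-1)})$ as a fallback candidate, one can ensure $\Phi^{(z)}\le \Phi^{(z-1)}$ is maintained. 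Combining the three steps, the overall objective across the full iteration is non-increasing.

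The main obstacle I anticipate is exactly this last point: rigorously certifying monotonicity through the greedy NFV-RA step, since Algorithm \ref{ALG_NFV} does not solve \eqref{NFVP} to optimality and, depending on how its greedy criterion interacts with the updated $(\mathbf{P},\boldsymbol{\rho})$, it could in principle produce an assignment that activates more servers than the previous round. The cleanest fix is to guard Step~3 by a comparison: accept the greedy output only when it does not increase $\Phi$, and otherwise retain $\boldsymbol{\eta}^{(z-1)},\boldsymbol{\beta}^{(z-1)}$; this guard preserves monotonicity without altering the fixed points of the recursion.

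Finally, to justify the qualifier \emph{sub-optimal}, I would invoke the non-convexity of \eqref{mainelas}: the feasible set contains binary variables (\eqref{1rhod}--\eqref{1etad}) and the scheduling constraint \eqref{1schul} is nonlinear and non-convex in $(\bold{T},\boldsymbol{X},\boldsymbol{\beta})$. Alternating minimization on a non-convex problem converges to a stationary point of the block-update map rather than to the global optimum, and the greedy Step~3 only guarantees a block-wise improvement, not global block-wise optimality. Hence the limit point is a sub-optimal (local) solution of \eqref{mainelas}, as claimed. An empirical optimality-gap assessment (reported later as roughly $13.66\%$) would corroborate that this local solution is nonetheless of high quality.
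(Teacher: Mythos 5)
Your proposal is correct and follows essentially the same route as the paper: define the elasticized objective $\Psi + W\cdot A$, show each block update cannot increase it, invoke the lower bound of zero, and conclude convergence of the monotone bounded sequence, with sub-optimality following from non-convexity. The one place where you and the paper diverge is the treatment of the NFV-RA step. The paper's Appendix~A chain of inequalities writes this step as $\min_{\boldsymbol{\eta}}\Lambda(\cdot)$, i.e., it treats Algorithm~2 as if it exactly minimized the block, and then separately asserts (Proposition~2, Appendix~B) that the greedy heuristic is monotonic across outer iterations because the number of activated servers is non-increasing when $R_u$ does not increase. You instead flag this as the weak link and propose a guarded acceptance rule (keep the previous $(\boldsymbol{\eta},\boldsymbol{\beta})$ if the greedy output would increase $\Phi$). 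Your guard is arguably the more rigorous fix, since it makes monotonicity hold by construction regardless of how the greedy criterion interacts with the updated radio variables, whereas the paper's argument relies on the informal proportionality claim in Appendix~B; both yield the same conclusion, and neither changes the fixed points of the iteration.
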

\begin{proof}
	{Please see Appendix A.}
\end{proof}
{Note that the value for the maximum iteration number, i.e., $ Z_{\text{TH}}$, in Algorithm 1 is considered from the algorithm implementation perspective  to ameliorate the run time and avoid the extra run time when a little improvement is achieved in the objective function. Therefore, from the theoretical perspective, the algorithm iterations are not limited. However, according to simulation results, the convergence of Algorithm 1 is obtained after few iterations.}
	\begin{pro}
	{	Algorithm  \ref{ALG_NFV} is a monotonic algorithm. Therefore, it generates the objective function values in the decreased order at each iteration $z$. Bear in mind that $ z $ is the iteration number of Algorithm \ref{ASM_algorithm}.}
	\end{pro}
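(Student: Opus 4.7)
The plan is to combine two ingredients: (i) the greedy construction inside Algorithm~\ref{ALG_NFV} only activates a new VM when strictly necessary, so its output is \emph{locally minimal} in the number of active servers; and (ii) the alternating step structure of Algorithm~\ref{ASM_algorithm} guarantees that the solution produced at iteration $z$ is a feasible warm start for iteration $z+1$, so the greedy can always do at least as well as before. Together these two observations give
\begin{equation*}
\sum_{n\in\mathcal{N}}\eta_{n}^{(z+1)} \;\le\; \sum_{n\in\mathcal{N}}\eta_{n}^{(z)},
\end{equation*}
which is exactly the monotonic (non-increasing) behavior claimed.

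For step (i) I would formalize the greedy loop as follows. Let $\mathcal{A}^{(z)} \subseteq \mathcal{N}$ be the set of servers that Algorithm~\ref{ALG_NFV} marks as active at outer iteration $z$. Walking through the outer \texttt{for} loop over $\tilde n$, a new server is added to $\hat{\mathcal{N}}_{\text{Used}}$ only when the feasibility check of constraints \eqref{elasbuf}--\eqref{elaslatency} fails for the previously selected set. Hence removing any single $n \in \mathcal{A}^{(z)}$ would violate at least one of these constraints, i.e., $\mathcal{A}^{(z)}$ is an \emph{inclusion-minimal} activation set for the placement $\boldsymbol{\beta}^{(z)}$ and schedule $(\bold T^{(z)}, \boldsymbol X^{(z)})$ that Algorithm~\ref{ALG_NFV} produces.

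For step (ii) I would argue warm-start feasibility. Between iteration $z$ and $z+1$ only the radio variables $(\bold P, \boldsymbol{\rho})$ are updated by Steps~1--2 of Algorithm~\ref{ASM_algorithm}; the VM capacities $L_n$, $\Upsilon_n$, the per-NF resource demands $\alpha^{f_m^s}$, $\psi^{f_m^s}$, and the packet sizes $y_u$ that drive the processing latencies $\tau_n^{f_m^s}$ in \eqref{deleq}--\eqref{Final_Del} are inputs independent of $(\bold P,\boldsymbol\rho)$. Consequently, the tuple $(\boldsymbol{\eta}^{(z)},\boldsymbol{\beta}^{(z)},\bold T^{(z)},\boldsymbol X^{(z)})$ still satisfies \eqref{elasbuf}--\eqref{elaslatency} and \eqref{1schul}, \eqref{1sercons}--\eqref{1etad} at iteration $z+1$. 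Since Algorithm~\ref{ALG_NFV} activates servers one at a time and exits as soon as feasibility is reached, it cannot be forced to use strictly more servers than the warm start; by the minimality in step (i), at worst it reproduces $\mathcal{A}^{(z)}$, giving $|\mathcal{A}^{(z+1)}|\le|\mathcal{A}^{(z)}|$.

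The main obstacle I anticipate is the coupling through the elastic variable $A$ and the AC procedure. If at some iteration the greedy cannot satisfy the (non-elasticated) latency/capacity constraints, the cost term $W\cdot A$ may temporarily rise even while $\sum_n \eta_n$ falls, so monotonicity must be stated for the NFV objective alone and not for the combined $\Psi + W\cdot A$. I would address this by noting that the ordering of users in $\tilde{\mathcal U}$ and of servers in $\tilde{\mathcal N}$ is fixed across outer iterations (it depends only on $D_u^{\max}$ and the latency metric, both invariant of $(\bold P,\boldsymbol\rho)$), so the deterministic tie-breaking of the greedy makes the warm-start argument above rigorous and the monotonic decrease of $\sum_{n\in\mathcal{N}}\eta_n$ in $z$ follows.
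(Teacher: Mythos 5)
There is a genuine gap, and it sits exactly where you flagged your "main obstacle." Your warm-start argument rests on the claim that the feasibility test of Algorithm~\ref{ALG_NFV} (the check of \eqref{elasbuf}--\eqref{elaslatency}) is invariant across outer iterations because the NFV inputs are independent of $(\bold P,\boldsymbol\rho)$. That claim fails on two counts. First, the elastic variable $A$ appears on the right-hand side of every constraint in \eqref{elasbuf}--\eqref{elaslatency} and is re-optimized in Step~2 of Algorithm~\ref{ASM_algorithm}; if $A$ decreases between iterations, the feasible region of the NFV sub-problem \emph{shrinks}, and the greedy can be forced to open more servers than at the previous iteration. Declaring that "monotonicity is stated for the NFV objective alone" does not repair this, because the constraint set itself moved. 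Second, and more fundamentally, the paper's model couples the NFV load to the radio solution: the packet size $y_u$ is defined as the number of bits generated in a unit time, i.e., it tracks the achieved rate $R_u$, and it enters both the processing latency \eqref{Final_Del} and the capacity/buffer constraints \eqref{1capcons}, \eqref{1bufcons}. If your independence assumption were true, Algorithm~\ref{ALG_NFV} would simply return the identical output at every iteration (its orderings, demands, and stopping rule would all be fixed), and the proposition would be vacuously true — which is clearly not what is being claimed.

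The paper's own proof goes through precisely this coupling rather than around it: it observes that for fixed system parameters the only quantity feeding Algorithm~\ref{ALG_NFV} that depends on the optimization variables is $R_u$, argues from \eqref{Final_Del} and the server-selection policy that the number of activated servers is monotonically increasing in $R_u$, and then concludes that since the radio sub-problems drive $R_u$ downward toward $R^{\min}_u$ (they minimize power and spectrum subject only to the rate floor), $R_u^{(z)}\le R_u^{(z-1)}$ implies $\boldsymbol{\eta}^{(z)}\le\boldsymbol{\eta}^{(z-1)}$. Your inclusion-minimality observation in step (i) is correct and is implicitly used by the paper too, but by itself it only says each iteration's activation set cannot be pruned — it does not order the activation counts across iterations. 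To close the gap you would need to (a) show the per-user load passed to Algorithm~\ref{ALG_NFV} is non-increasing in $z$ (the $R_u$ argument), and (b) handle the direction in which $A$ moves across iterations, neither of which your current draft establishes.
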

	\begin{proof}Please see Appendix B.
		 \end{proof}
\subsection{\textbf{Computational Complexity}}
The main concerns behind developing an algorithm for solving optimization problems are the complexity order of the algorithm and the performance in terms of the optimality gap. To this end, we analyze the complexity of the proposed algorithm as well as the optimality gap which is discussed in Section \ref{Optimal_Gap}.  
	By utilizing the iterative approach, the overall complexity of the algorithm is a linear combination of the complexities of each sub-problem. Therefore, we discuss each algorithm and demonstrate that the order of the complexity is polynomial.
\subsubsection{\textbf{Radio RA}}
For the radio RA sub-problem, we utilize
geometric programming (GP) and  IPM via CVX  toolbox in MATLAB \cite{grant2006matlab}. Based on this method, the computational complexity order of power allocation sub-problem  is given by 
$\frac{\log(\frac{C_1}{\xi\varrho})}{\log(\varsigma)}$ where $C_1=2U+N(1+U)+1\approx N\times U,~U>N,
$ is the total number of constraints of sub-problem \eqref{powerallocation}, $\xi$ is the initial point for approximating the accuracy of IPM, $0<\varrho\ll 1$ is the stopping criterion for IPM, and $\varsigma$ is \textcolor{black}{ the accuracy of} IPM \cite{grant2006matlab}. Similarly, the complexity of sub-problem \eqref{submain} is given by $\frac{\log(\frac{C_2}{\xi\varrho})}{\log(\varsigma)}$ where $C_2=U+K+1$ is the total number of constraints of \eqref{submain}.
\subsubsection{\textbf{NFV-RA}}
Based on  the proposed heuristic algorithm in Algorithm \ref{ALG_NFV} for NFV-RA, \textcolor{black}{the complexity order of sub-problem \eqref{NFVP} is the total number of main calculations that are required for solving it.} Hence, the upper bound of complexity of Algorithm \ref{ALG_NFV} is $\mathcal{O}(U^2\times F\times N)$. The order of computational complexity of all sub-problems are summarized in Table \ref{com}.\\
%
\begin{table}
	\centering
	\renewcommand{\arraystretch}{1.2}
	\caption{Complexity order of the proposed solutions}
	\label{com}
	\begin{tabular}{|c|c|c}
		\hline
		\textbf{Algorithm}&	\textbf{Complexity} \\	
		\hline	
	Greedy-based heuristic algorithm & $\mathcal{O}(U^2\times F\times N)$
		\\
		\hline	
		Greedy-based algorithm & $\mathcal{O}(U^2\times F\times N)$
		\\
		\hline
		Power Allocation: CVX   & ${\log\left(\frac{C_1}{\xi\varrho}\right)}\big/{\log(\varsigma)}$ 
		\\
		\hline
		Subcarrier Allocation: CVX-MOSEK  &	${\log\left(\frac{C_2}{t^0\varrho}\right)}\big/{\log(\varsigma)}$\\
		\hline
	\end{tabular}
\end{table}
\section{\textbf{Experimental Evaluation}}\label{simulations}
In this section, we evaluate the proposed framework from different aspects and compare it with some baselines.  First of all, we present the network configuration ({next section}) and then discuss the obtained  results (Section \ref{Simulation_Results}).
\subsection{\textbf{Simulation Environment and Software Toolbox}}
	In this section, the simulation results are presented to evaluate the performance of the proposed system model. 
We consider $U=50$ users \textcolor{black}{which are} randomly distributed in the converge area of \textcolor{black}{a BS with radius} $500$ m, $\sigma=10^{-7}$ Watts, $h_{u}^{k}=x_{u,k}(d_u)^{-\varphi}$ where $\varphi=3$ is the path loss exponent, $x_{u,k}$ is the Rayleigh fading, and $d_u$ is the distance between the BS and user $u$ \cite{8786250}. Moreover, we set $K=64$, $S=20$, and $N=25$. 
We suppose that the users request services randomly with uniform distribution as  $\text{RS}_{u}\sim U_d[1,~ S]$.   We define $M=15$ different NFs with unique labels $1$-$15$, \textcolor{black}{i.e., $\mathcal{F}=\{f_1,\dots,f_{15}\}$}. Each NS $s$ is a combination of several NFs.    
Each NF in the requested service utilizes the existing network resources until its processing time is completed. 
{The radio network and scaling/fitting parameter settings are summarized in Table \ref{setting}. Please note that the scaling parameters are set based on the experimental evaluation of the objective function. Also in some cases such as pricing studies, some of them can be optimized, e.g., power unit cost in \cite{7899529}.}
\begin{table*}
	\centering
	\renewcommand{\arraystretch}{1.2}
	\caption{Radio access network configuration and fitting parameters values}
	\label{setting}
	\begin{tabular}{ |c|c|c|c|}
		\hline
		\textbf{Parameters(s)}  & \textbf{Value(s)}&\textbf{Parameter}  & \textbf{Value}  \\
		\hline
		$U$ & \text{Min}$=5$~~\text{Max}$=50$&$\mu_1$&1 $\$$/Watts
		\\
		\hline
		$K$ & $64$&$ \mu_2 $&1 $\$$/KHz
		\\
		\hline
		$\sigma$ &    $10^{-7}$~~\text{Watts}&$\mu_3$&10 $\$$
		\\
		\hline
		\text{BS radius} &    $500 ~\text{m}$&$\kappa_1$& 50/bps/Hz
		\\
		\hline
		$P_{\max}$ &    $40$ \text{Watts}&$\kappa_2$&1/MB
		\\
		\hline
		$\varrho$    &	$3$&$\kappa_3$&1/CPU cycle per second
		\\
		\hline
		$R^{\min}_{u}$    &	\text{Min}$=5$~~\text{Max}$=20$~~\text{in bps/Hz}&---&
		\\
		\hline
	\end{tabular}\label{simval}
\end{table*}
For \textcolor{black}{the} sake of clarity of the network configuration, also the main related core network  parameters  utilized in these simulations for creating the VMs and services are chosen randomly based on the uniform
distribution with the minimum and maximum values, i.e., $U_d[\min_{\text{Value}},~\max_{\text{Value}}]$ that are shown in
Table \ref{CNP} \cite{mijumbi2015design}.
\begin{table}[h]
	\centering
	\renewcommand{\arraystretch}{1.2}
	\caption{Core network configuration parameters values}
	\label{Core Network parameter setting}
	\begin{adjustbox}{width=.48\textwidth,center}
	\begin{tabular}{ |l|c|}
		\hline
		\textbf{Parameters(s)}  & \textbf{Value(s)}  \\
		\hline
		$N$ & \text{Min}$=15$~~\text{Max}$=40$~~
		\\
		\hline
		\text{Server storage/buffer capacity} &   \text{Min}$=1000$~~\text{Max}$=1500$ \text{MB}
		\\
		\hline
		\text{NF storage demand} &    \text{Min}$=5$~~\text{Max}$=15$ \text{MB}
		\\
		\hline
		\text{Number of services} &   \text{Min}$=10$~~\text{Max}$=25$
		\\
		\hline
		\text{Number of NFs}\text{ in each service} &   \text{Min}$=5$~~\text{Max}$=15$
		\\
		\hline
		Server processing capacity, i.e., $L_n$    &\text{Min}$=1500$~~\text{Max}$=3000$\\& \text{CPU cycle per second}
		\\
		\hline
		\text{Processing demand}\text{ of each NF}    &\text{Min}$=5$~~\text{Max}$=20$ \\&\text{CPU cycle per bit per unit time}
		\\
		\hline
		\text{Service processing deadline}    &\text{Min}$=0.3$~~\text{Max}$=7$ \text{ Second}
		\\
		\hline
	\end{tabular}\label{CNP}
\end{adjustbox}
\end{table}
The obtained results presented next are based on the simulation in Matlab Software and hardware with specs as Core i7 CPU and $8.00$ GB RAM.
\subsection{{Simulation Results}}\label{Simulation_Results}
	The simulation results are discussed in two main categories:
\\ 1) \textit{The investigation of the proposed system model under different network settings and parameters.} The results of this category are shown in Figures \ref{Acceptance_Ratio_Delay}-\ref{Server_Del}.
\\ 2) \textit{Comparison of the solution algorithm and framework with the considered baselines}. 
\\ 
{Note that the simulation results are obtained by averaging over 500 Monte-Carlo runs.}
 We discuss these results in the following.
\subsubsection{{Service Acceptance Ratio}}
The service acceptance ratio (SAR) is \textcolor{black}{defined} by the ratio of the number of accepted services by the network to the total \textcolor{black}{number of the requested services} by users and is \textcolor{black}{obtained by} $\varkappa=1-\frac{\hat{U}}{U}$ where  $\hat{U}$ is the number of users that their services are rejected based on the proposed AC. It is a criterion to investigate the efficiency of the proposed algorithm in utilizing total network resources to guarantee the requested QoS and accept the service demands.

As can be seen from {Figures \ref{Acceptance_Ratio_Delay}, \ref{Acceptance_Ratio_User}, and \ref{Acceptance_Ratio_Node}}, \textcolor{black}{the value of the} acceptance ratio depends on two main factors, i) the network resources capacity; ii) the number of users (service demands) and service QoS characteristics (latency and data rate).
Therefore, it is challenging to address high data rate and provide low latency services.

\begin{figure*}[t]
	\begin{center}$
		\begin{array}{cc}
		\subfigure[h][SAR versus the service deadline.]{
			\includegraphics[width=0.32\textwidth]{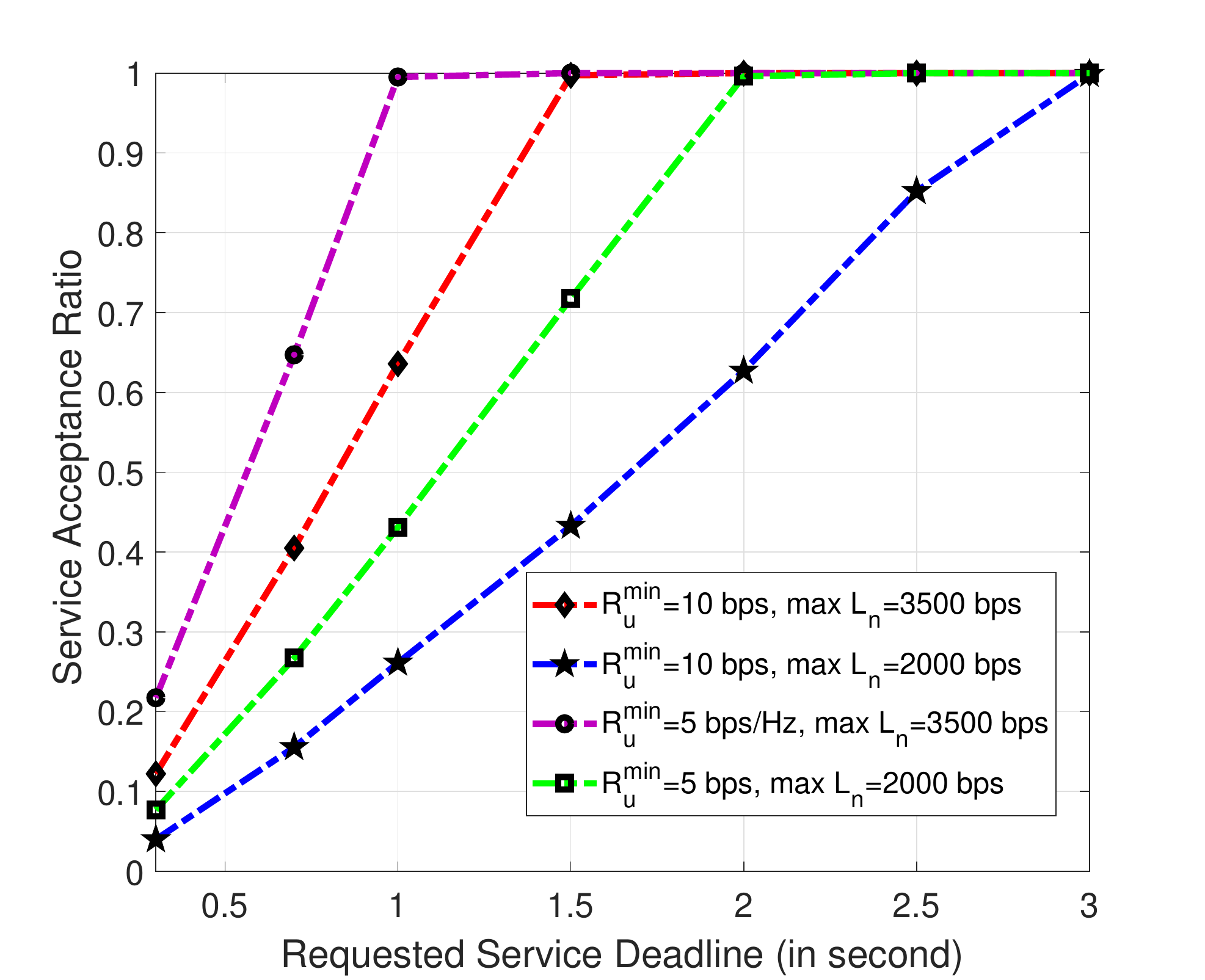}
			\label{Acceptance_Ratio_Delay}}
		\subfigure[SAR versus the total number of users.]
		{
			\includegraphics[width=0.32\textwidth]{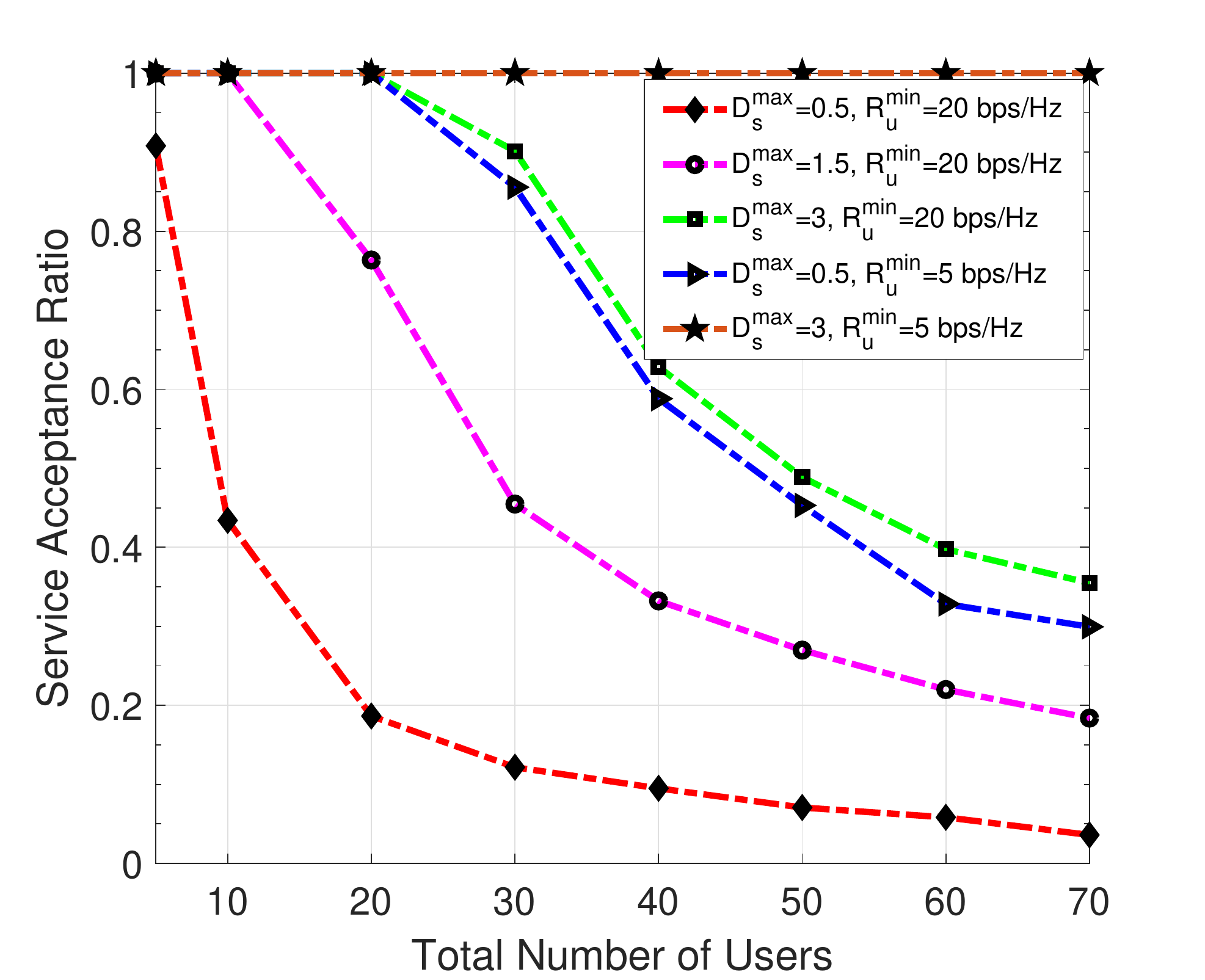}
			\label{Acceptance_Ratio_User}}
				\subfigure[h][SAR versus number of servers.]{
			\includegraphics[width=0.34\textwidth]{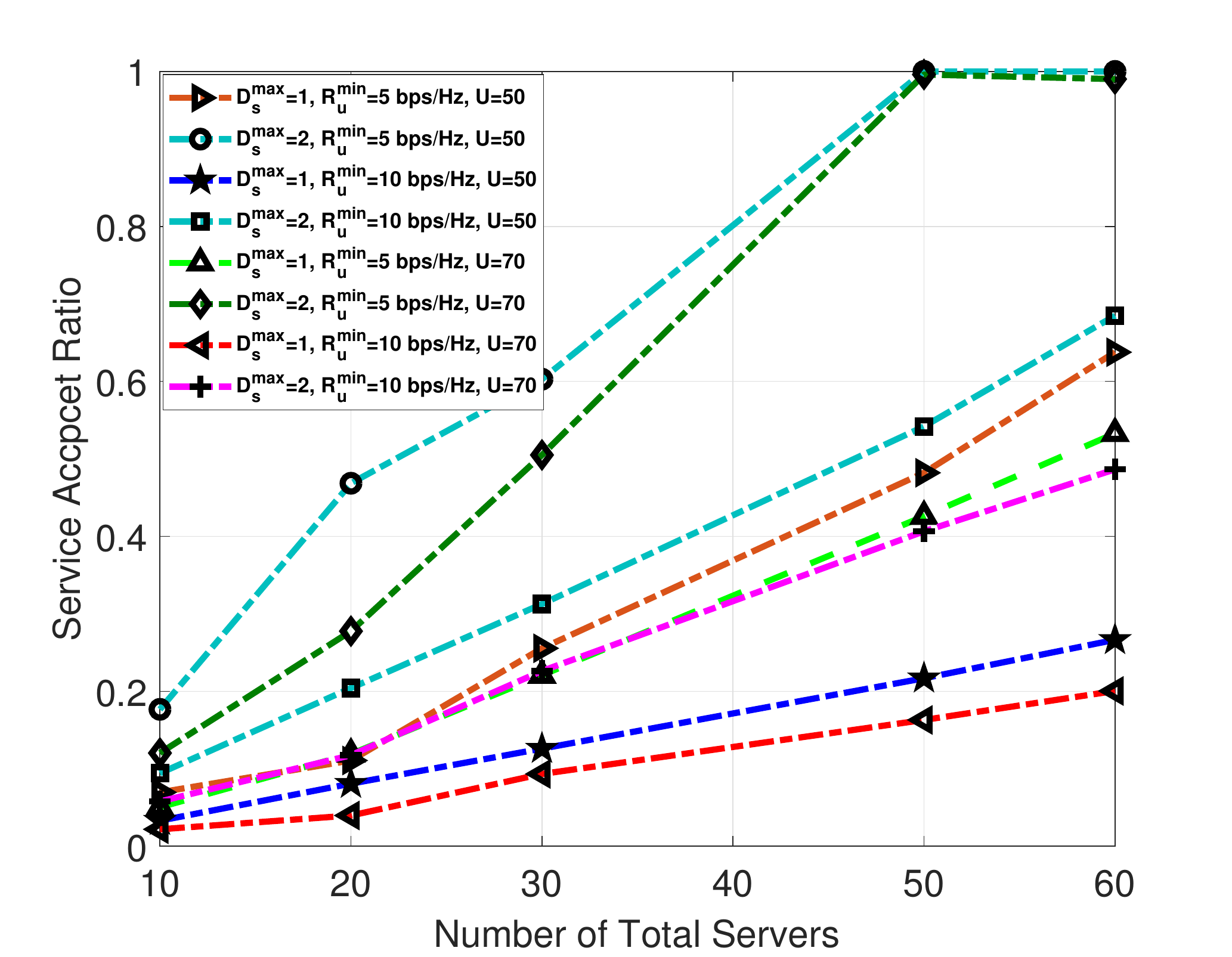}
			\label{Acceptance_Ratio_Node}}
		\end{array}$
	\end{center}
	\caption{Evaluation of SAR with different parameters.}
	\label{SAR}
\end{figure*}

{Fig. \ref{Acceptance_Ratio_Delay} shows the variation of \textcolor{black}{the SAR} with different \textcolor{black}{values of the} service latency and server's processing capacity. This result is obtained for $N=40$, $F=S=15$, $U=50$, $\alpha^{f_{m}^{s}}=20$, $P_{\max}=40$ Watts, and $R^{\min}_{s}= 5$ bps/Hz. \textcolor{black}{Form this figure, it can be seen that} the rejection probability of the low latency services is higher than that of \textcolor{black}{other types of} services.
	The reason for this is that these services need more servers with high processing capacity to reduce \textcolor{black}{the waiting and processing time.} It is clear that by increasing \textcolor{black}{the} latency from $0.3$ to $1$, the acceptance ratio is increased approximately \textcolor{black}{by} $50$\%.
	Moreover, this figure shows the impact of \textcolor{black}{the} minimum data rate on the acceptance of \textcolor{black}{the} service request. As can be seen from this figure, in contrast to the latency requirement, high data rate services are rejected by the network. 
	It would be better that we investigate the effect of latency versus the data rate. 
	Clearly,  if \textcolor{black}{the} minimum data rate value is doubled, on average approximately $\frac{0.4292}{0.7372}=0.58 \equiv 58$\% of users are rejected. 
	Whereas, if the latency is halved ($0.7$ to $0.3$), on average approximately $\frac{0.1155}{0.1555}=0.74 \equiv 74$\% of users are rejected by the network.
	Moreover, from this figure, we observe that by increasing \textcolor{black}{the} maximum processing capacity form $2500$ to $3500$, the SAR improves \textcolor{black}{by approximately $2$  times.} In other words, physical resource capacity has a major effect on the acceptance of services by the network, especially for \textcolor{black}{the} average latency of about $[1~2]$.
	Whereas,  high order latency services are not sensitive to the value of the server's capacity.}

Fig. \ref{Acceptance_Ratio_User} illustrates the variation of the value of SAR with the number of users (service arrivals) for different service deadlines and data rates. In this figure, we set $K=64$, $P_{\max}=80$ Watts, $N=40$, $F=15$, $L_n\in [500 ~2000]$ CPU cycles per second, and $S=15$.  \textcolor{black}{Clearly, by increasing the number of users (service requests) the acceptance ratio is decreased, especially for low latency services that have the main contribution on the acceptance ratio.
	We observe that increasing the number of low latency services leads to reducing the acceptance ratio. For the large number of users, the network guarantees some users' service requirements and other users are rejected. For this cases, based on $\varkappa$, the value of $\hat{U}$ is increased and $ U-\hat{U}$ approximately reaches to a fixed value.}

Fig. \ref{Acceptance_Ratio_Node} shows the variation of \textcolor{black}{the} SAR with increasing the number of servers for different scenarios.
In this figure, we set $U=[50~ 70]$, $\max L_n=2500$ bps, $\alpha^{f^{s}_{m}}=20$, and $M=S=15$. Clearly, increasing the number of  servers in the network improves the SAR.
Due to the fact that increasing the number of servers reduces the waiting time of NFs to run in the mapped servers. \textcolor{black}{On the other hand, the probability of  the large  number of mapped NFs on each server is  low and hence, the waiting and processing times are reduced.}
Therefore, the latency and buffering requirements are satisfied and the acceptance ratio of services is improved. From this figure, we conclude that the impact of the number of active servers on \textcolor{black}{the high data rate and low latency services e.g., process automation \cite{osseiran2015manufacturing} is more than that of other services. }
Furthermore, by comparing Fig. \ref{Acceptance_Ratio_Node} and Fig \ref{Acceptance_Ratio_Delay}, we obtain that the effect of \textcolor{black}{the} server processing capacity is more considerable than the number of active servers on the low latency services. 	That means the low latency services are rejected by the network because their requirements need more resources in the network to reduce waiting and processing times. 
\subsubsection{{Network Cost}}
Fig. \ref{Cost_U} illustrates the network cost versus the variation of the number of users for $R^{\min}_{s}=10$ bps/Hz and service deadline $2$ second. 
The network cost \textcolor{black}{is comprised of} both radio and NFV resources costs in terms of power and spectrum consumption and utilizing servers in the network. It can be observed that by increasing the number of users \textcolor{black}{the} network cost increases \textcolor{black}{due to increase in both the radio} and NFV costs.
It is clear that by increasing the number of users the NFV cost increases rapidly \textcolor{black}{compared to the radio cost.}

Fig. \ref{Ser_Uti} investigates the impact of the total number of users in the network on the \textcolor{black}{utilization of resources} with different minimum \textcolor{black}{data rates (as a packet size) and service deadlines.} In this experiment, we restrict the number of servers to $80$ with the maximum processing capacity $L_n=3000$, $M=15$, $S=20$, and $\alpha^{f_{s}^{m}}=20$. 
We define the utilization ratio as $\text{Uti}_{\text{Ratio}}=\frac{r_U}{r_T}$ where $r_U$ is the amount of the resources utilized by the users and $r_T$ is the total server's resources. 
From this figure, we infer that not only the packet size has a direct effect on the utilization ratio, but also the service deadline has a major impact on this. 
This is due to the fact that a large packet size needs more storage and processing capacity and low service deadline needs minimum waiting and processing times.
Therefore, we should make active more servers and exploit their resources for low latency services.  Obviously, increasing the number of users \textcolor{black}{increases} the utilization ratio approximately \textcolor{black}{in a linear form.
	From the cost perspective, we can conclude that by increasing the utilization of network resources, the \textcolor{black}{network cost} is also increased, especially in terms of power consumption. }
\begin{figure*}[t]
	\begin{center}$
		\begin{array}{cc}
		\subfigure[h][The network cost versus number of users.]{
		\includegraphics[width=0.45\textwidth]{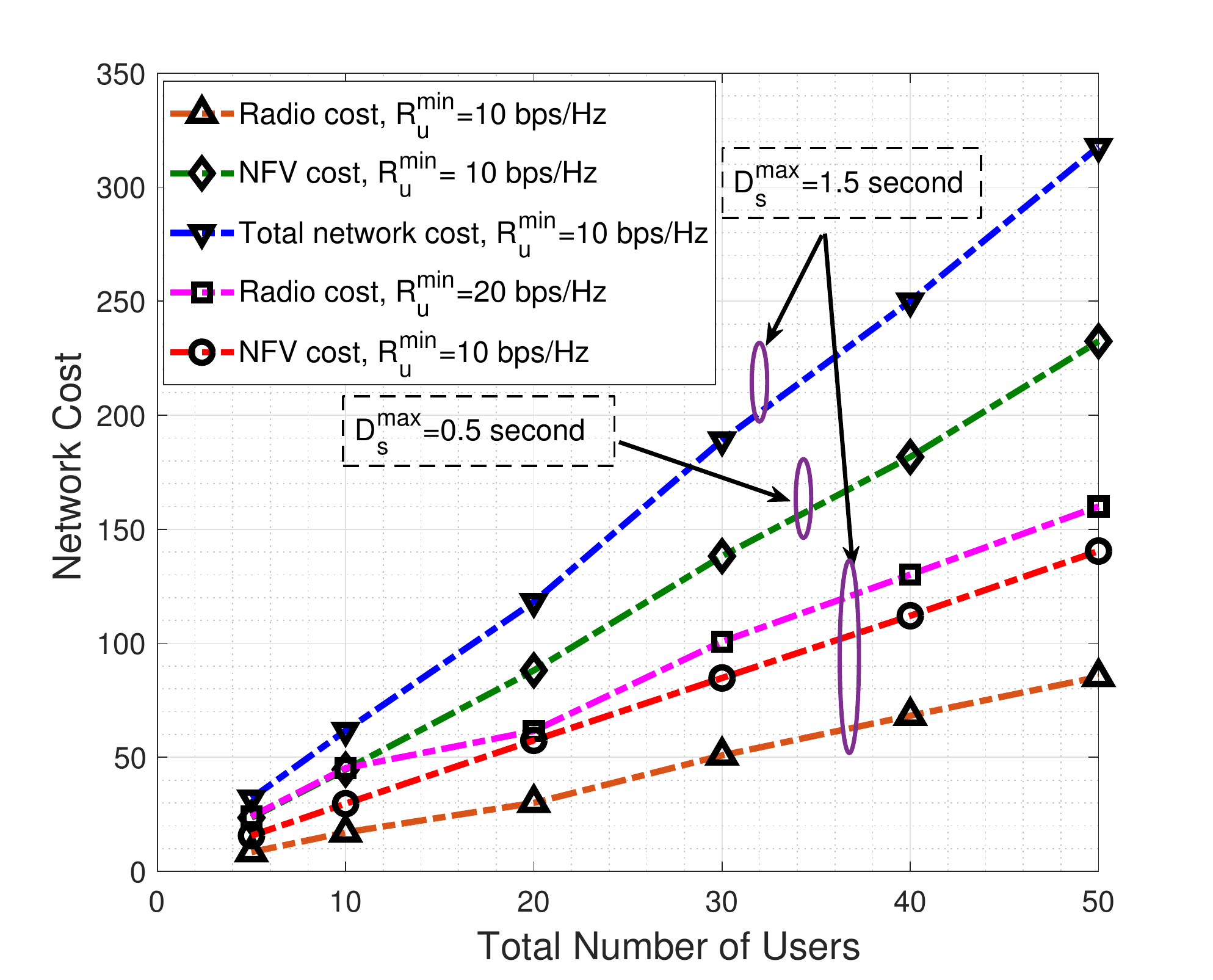}
			\label{Cost_U}}
		\subfigure[h][The network cost versus service deadline.]{
			\includegraphics[width=0.45\textwidth]{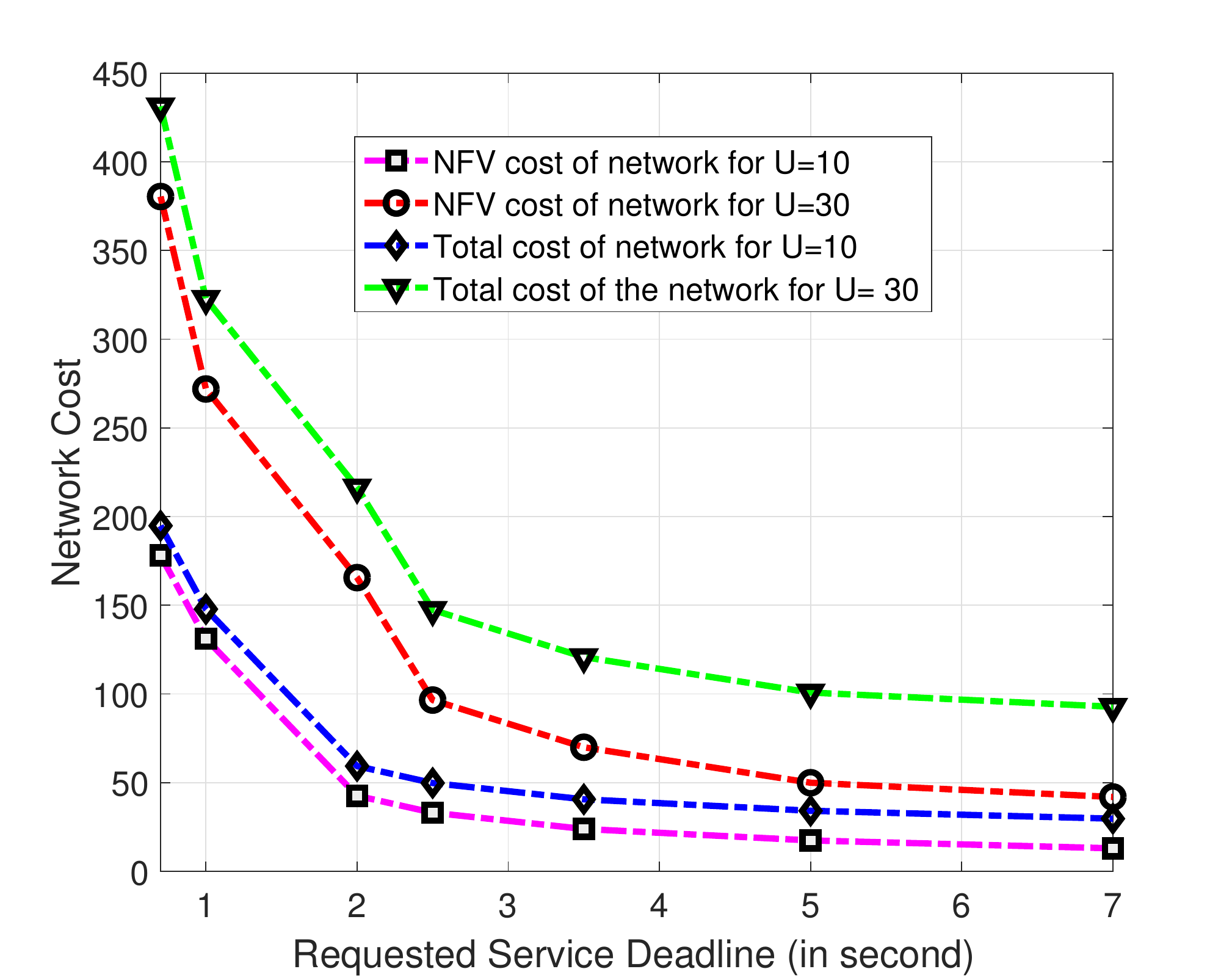}
			\label{Cost_Del}}
	\end{array}$
	\end{center}
	\caption{Evaluation of network cost with different parameters.}
	\label{NC}
	\end{figure*}
\subsubsection{{Service Deadline}}
\textcolor{black}{Figures \ref{Cost_Del} and \ref{Server_Del} show the total cost of the network versus different values of the service deadline} for various scenarios. 
Clearly, the requested service deadline has a major effect on the utilization of processing and buffering resources in servers. 
Form Fig. \ref{Server_Del}, we conclude that for services with lower latency requirements, more servers should be active to process the VNFs of the corresponding services.
That means for providing low latency services, we should pay more costs in terms of radio and NFV resources.
\textcolor{black}{By increasing the number of servers, the waiting time for each NF in a NS that it is in queue is minimized, and hence, server availability and probability of QoS guarantee for users are increased. }
For higher latency values in some cases, one (or two) active server(s) is sufficient.
By comparing Fig. \ref{Cost_Del} and Fig. \ref{Cost_U}, we obtain that by reducing \textcolor{black}{ the value of the latency}, the network cost \textcolor{black}{increases significantly compared to the case where the number of users (the numbers of service requested) increases.}
\begin{figure*}[t]
	\begin{center}$
		\begin{array}{cc}
		\subfigure[h][The ratio of the utilization of server's resources versus the \text{~~~~~} total number of users.]{
			\includegraphics[width=0.45\textwidth]{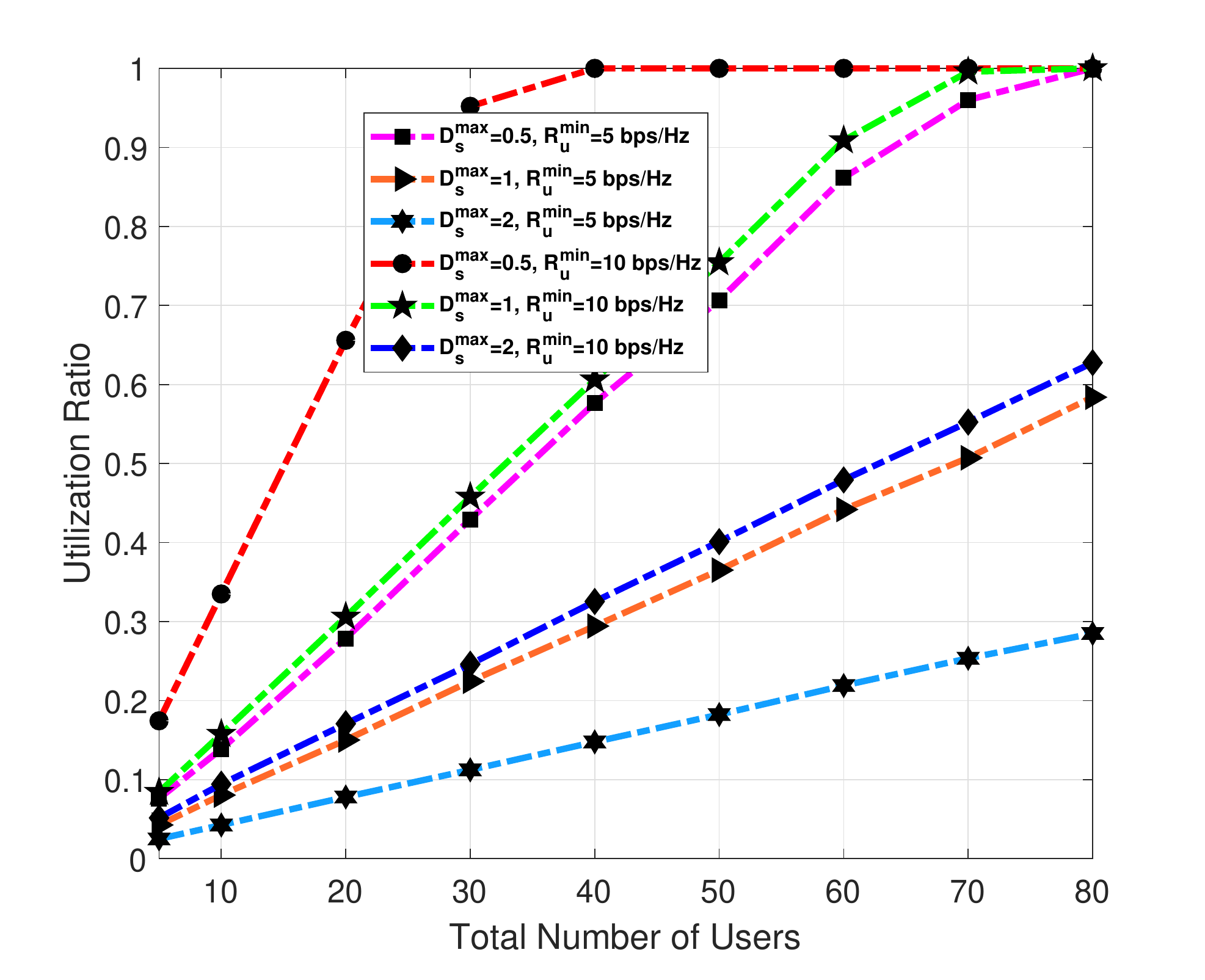}
			\label{Ser_Uti}}
		\subfigure[h][The number of active (on) servers versus the service deadline.]{
		\includegraphics[width=0.45\textwidth]{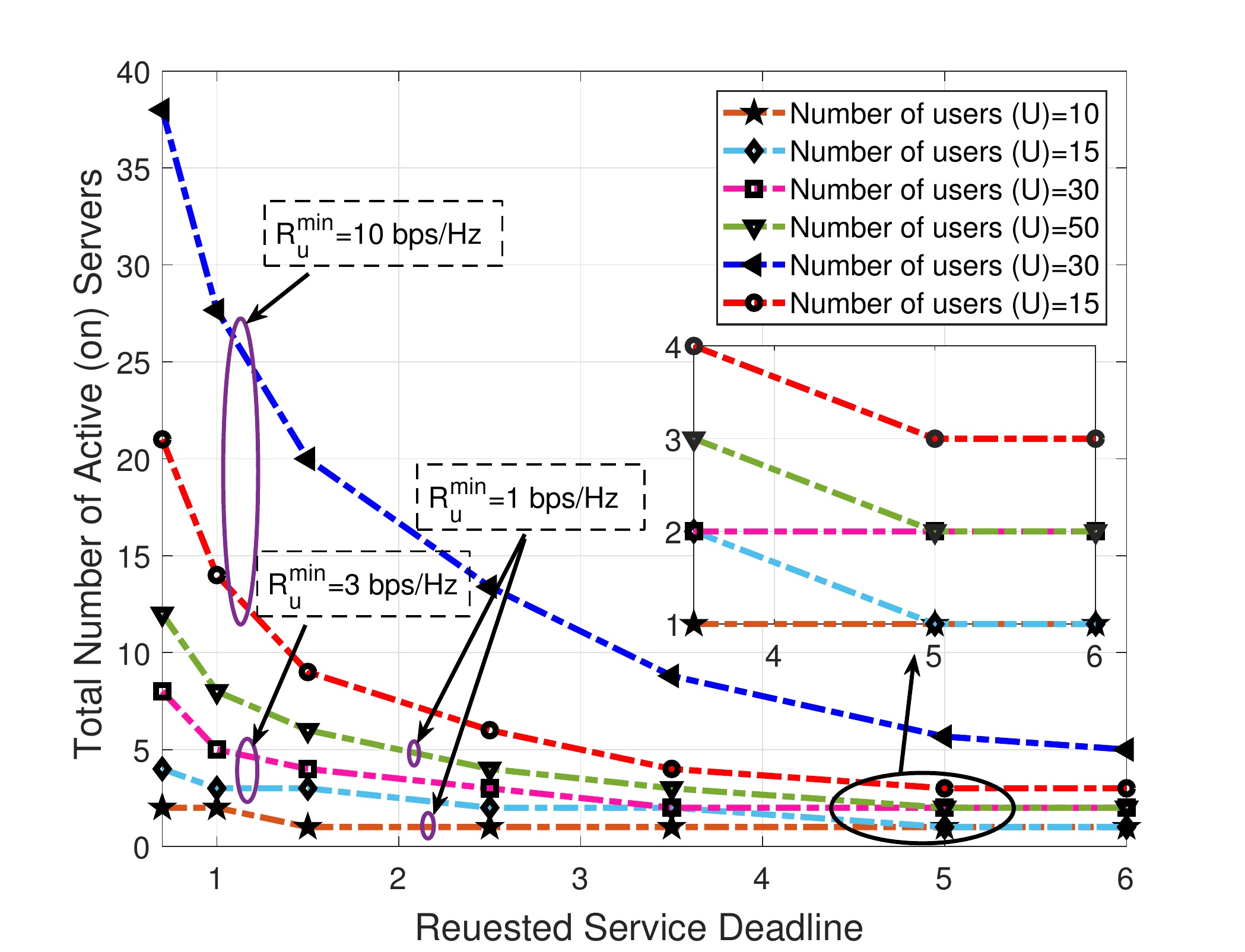}
			\label{Server_Del}}
		\end{array}$
	\end{center}
	\caption{Evaluation of utilization of resources in terms of the number of active servers and ratio of active servers's resources  with variation of different parameters.}
	\label{Ser_U_O}
\end{figure*}
\subsection{{Benchmark Algorithms }}\label{Bench_Mark}
To the best of our knowledge, this is the first work (refer to the related works) tackling the  effect of the radio resource on the NFV-RA. Moreover, we propose the new AC and a new closed form  \textcolor{black}{formulation of} NFV scheduling to minimize the total network cost. This is matched with the E2E network slicing concept. 
We compare our work with \cite{mijumbi2015design,8647504}, and \cite{7949048}  in terms of the proposed solution  algorithm design and framework. 	{Moreover, we compare performance of the proposed  iterative solution of original problem, i.e., (\ref{ASM_algorithm}) with global optimal for a small scaled network. Details are in the following.}
\subsubsection{Comparison with \cite{mijumbi2015design}}
 In \cite{mijumbi2015design}, the authors propose a greedy algorithm\footnote{{It is worth nothing that in the} related works, a greedy-based algorithm with different criteria is exploited \cite{mijumbi2015design, 6968961, zeng2018stochastic}.} [Algorithm 1 of \cite{mijumbi2015design}] for VNF placement and scheduling similar to our algorithm \ref{ALG_NFV}. Therefore, we compare our heuristic algorithm with a modified version of the greedy-based algorithm, which is proposed in \cite{mijumbi2015design}.
To avoid confusion  with our proposed greedy-based algorithm and the baseline greedy algorithm, we use the term heuristic/proposed algorithm for our algorithm.
	  In the greedy-based \textcolor{black}{ search,} different objectives can be considered, for example, \textcolor{black}{minimizing} the total flow time \cite{mijumbi2015design}.
The greedy-based scheduling and embedding of the arrived service requests are performed sequentially based on the greedy criteria. 
Based on the modified greedy algorithm to solve sub-problem \eqref{NFVP}, first, we search the servers that are appropriate for embedding and then find the best server by greedy criterion \cite{mijumbi2015design}. The steps of the greedy-based algorithm with the minimum queue time criterion is stated in Algorithm \ref{greedy} based on \cite{mijumbi2015design}.
\begin{algorithm}[!ht]
	\small
		\renewcommand{\arraystretch}{0.1}
	\caption{Greedy-based NFV-RA to solve sub-problem \eqref{NFVP} based on \cite{mijumbi2015design}}
	\label{greedy}
	\KwInput{ {$S,F$}, {$\Omega_{s}$}, {$\alpha^{f_{m}^{s}}$}, $\rho^{f_{m}^{s}}$, $t'(n)=0, \forall n\in\mathcal{N}$ is the last running time of server $n$  }
		 
		Sort in ascending order all users according to the latency requirement and write user's index in $\tilde{\mathcal{U}}_T$
		
		\For { $u$=1: $\tilde{U}_T$}
		{
		\For {$m$=1:$|\Omega_{s}|$ ($s$ is requsted service of user $u$)}
		{
			Check  processing and buffer constraints, i.e.,  \eqref{elasbuf} and \eqref{1rho}, and write servers that satisfy them in set of candidate servers as $\mathcal{N}_{\text{Can}}$
			
		 Sort in ascending order $\mathcal{N}_{\text{Can}}$ according to greedy criterion, i.e., the shortest queuing time for function $f_{m}^{s}$ 
		 
		Select the first rank server and set $\beta_{u,n}^{f_{m}^{s}}=1$ (index $n$ has first rank in $\mathcal{N}_{\text{Can}}$)
		
		\If {$m == 1$} {
				$ {t}_{u, n}^{f_{m}^{s}}=t'(n)$ }
		\Else
		{
		$ {t}_{u,n}^{f_{m}^{s}}=\text{max}\{t_{u,n}^{f_{m-1}^{s}},t'( n)\}$ \&
		$t'( n)=t_{u,n}^{f_{m}^{s}}+\tau_{ n}^{f_{m}^{s}}$
	}
		
		 Update the last released time of server $n$
		
		 Calculate $\phi_{u}^{f^{s}_{m}} =\sum_{n'\in\mathcal{N}}\beta_{u,n'}^{f_{m'}^{s}} (t_{u,n'}^{f_{m'}^{s}}+\tau_{n'}^{f_{m'}^{s}})$
		 
		\For  { $u' \in\mathcal{U}, u' \neq u$ \&  $f_{m'}^{s'} \in \Omega_{u',s'}$}
		{
		\If {
			$\beta_{u',n}^{f_{m}^{s'}}=1$ \& $\phi_{u}^{f^{s}_{m}} \ge t_{u',n}^{f_{m'}^{s'}}$}
		{
		$x_{u,u'}^{f_{m}^{s},f_{m'}^{s'}}=1$
	}
	}
	
		Calculate $t_{u,n}^{f_{m}^{s}}$ according to constraint \eqref{1schul} \&	 $t'(n)=t_{u,n}^{f_{m}^{s}}$
	}
}
		\KwOutput{ $\bold{T}$, $\bold{X}$, $\boldsymbol{\eta}$, and $\boldsymbol{\beta}$}
\end{algorithm}

Fig. \ref{Com_Accep_ratio} highlights the comparison of the proposed algorithm with the greedy algorithm \cite{mijumbi2015design} from the acceptance ratio perspective. As seen, the acceptance ratio of the heuristic/proposed algorithm is better than the greedy algorithm in \cite{mijumbi2015design}.
	For a small number of users, the results of both algorithms are the same. \textcolor{black}{ As a reason, since in this case resource requirements are low, both algorithms} \textcolor{black}{accept} almost all users.   
\begin{figure*}[!ht]
	\begin{center}$
		\begin{array}{cc}
		\subfigure[h][The SAR versus number of users.]{
			\includegraphics[width=.43\textwidth]{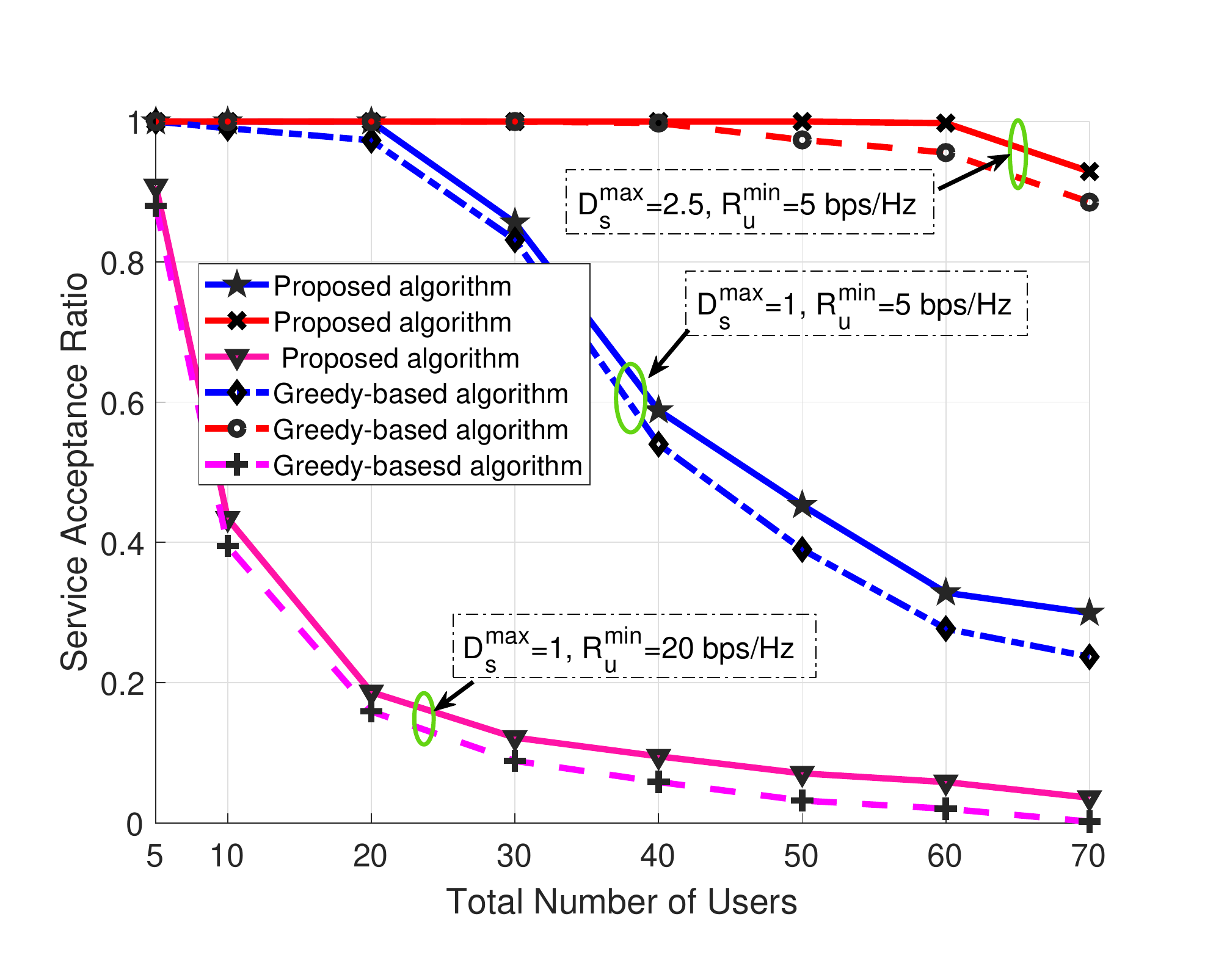}
			\label{Com_Accep_ratio}}
		\subfigure[h][Number of active servers versus the service deadline.]{
			{\includegraphics[width=.43\textwidth]{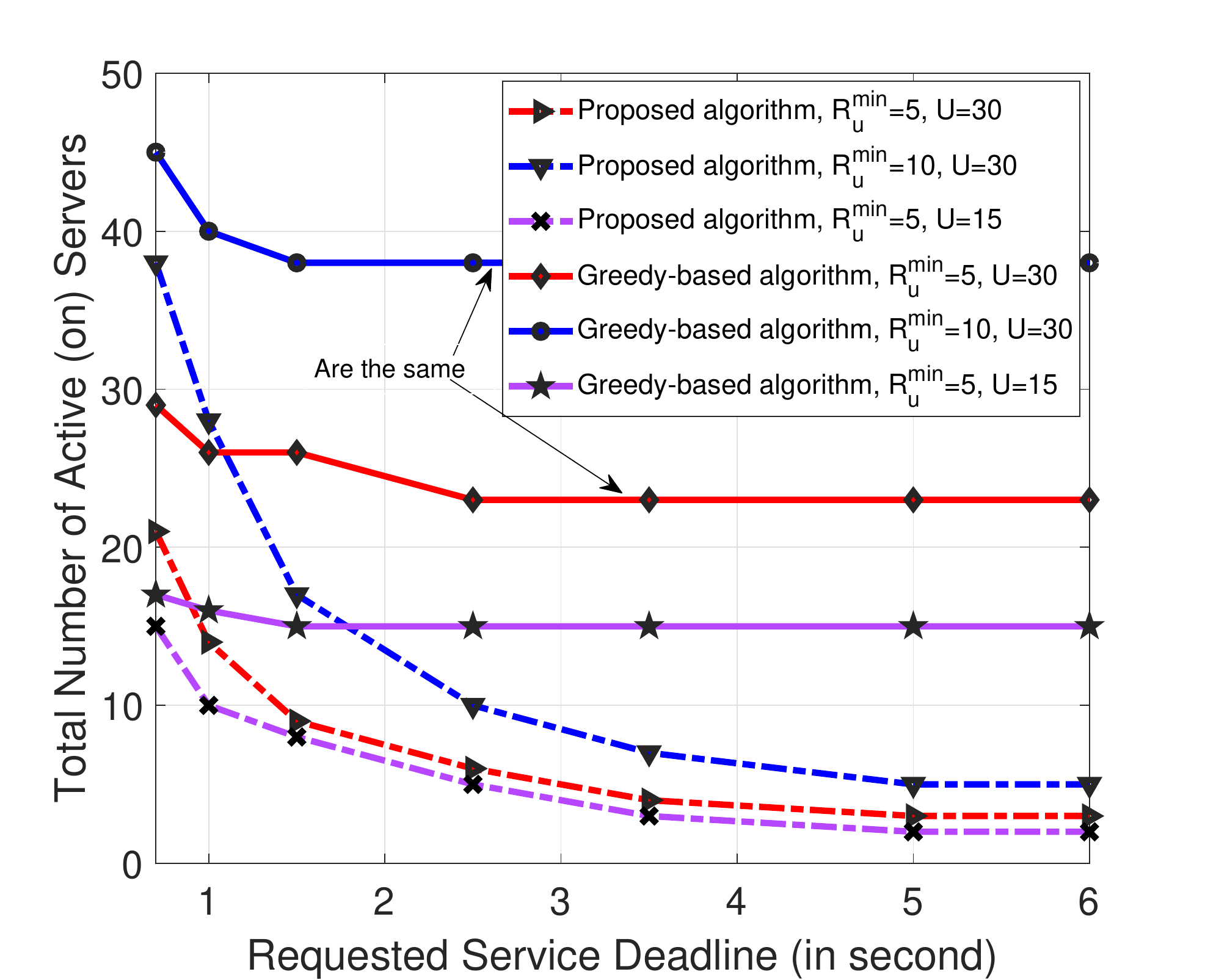}}
			\label{Com_Active_Server}}
	\end{array}$
	\end{center}
	\caption{\textcolor{black}{Comparison with the state of the art method {adopted} from \cite{mijumbi2015design}.}}
	\label{SCW}
	\end{figure*}
Moreover, we compare the impact of two mentioned algorithms on the number of active servers versus the requested service deadline in Fig. \ref{Com_Active_Server}. As seen, the number of active servers/VMs in our proposed algorithm is \textcolor{black}{lower than} that of the greedy algorithm.
As a reason, in the greedy algorithm for each NF, the algorithm finds a server with the lowest queuing time. In some cases, the algorithm adds servers \textcolor{black}{that are released} and have more processing capacities.  While it is possible to satisfy the latency of other functions without utilizing this server.  
In contrast, the heuristic algorithm \textcolor{black}{activates} a server when the \textcolor{black}{previously added servers (activated servers)} cannot satisfy the constraints of the problem and users QoS.  More importantly, in the greedy algorithm, the number of active servers is fixed after increasing the values of the service deadlines, which is the consequence of its server selection policy, which is based on the queuing time. However, as it could be seen,  in the heuristic algorithm, the number of active servers is reduced.

To \textcolor{black}{better demonstrate this, we assume that} we have five servers in the network with specific \textcolor{black}{capacities} as $[1000~ 2000~ 1500~ 3000~ 1800]$ and two \textcolor{black}{service requests} with $2$ functions with capacity requirements $20$ and $40$, $R^{\min}_{s}=10$ bps/Hz and service deadlines $0.3$ and $0.7$, respectively. 
Based on Algorithm \ref{ALG_NFV}, the service finishing time of user $1$ is $\frac{20\times 10}{3000}+\frac{40\times 10}{3000}=0.2<0.3$ and service finishing time of user $2$ is $0.2+ \frac{20\times 10}{3000}+\frac{40\times 10}{3000}=0.4<0.7$. That means one active server is sufficient \textcolor{black}{for} all users. 
While based  on the greedy algorithm, the finishing service time of user $1$ is $ \frac{20\times 10}{3000}+\frac{40\times 10}{3000}=0.2<0.3$ and \textcolor{black}{that} of user 2 is $\frac{20\times 10}{2000}+\frac{40\times 10}{2000}= 0.3<0.7$, since $0.3<0.4$, the greedy algorithm \textcolor{black}{ selects a server} with capacity $2000$ \textcolor{black}{instead} of the server with capacity of $3000$.
As a result, based on the greedy algorithm, two servers are utilized while in the proposed algorithm, \textcolor{black}{only}  one server in both cases \textcolor{black}{is utilized}.
Clearly, the greedy algorithm utilizes the servers inefficiently, and hence, \textcolor{black}{the acceptance ratio} is decreased especially for a large number of users (see Fig. \ref{Com_Accep_ratio}).
\subsubsection{Comparison With \cite{8647504}}
As mentioned before in the related work, the authors of \cite{8647504} aim to minimize the number of nodes hosting the NFs by considering the QoS and available resource constraints. We also follow this approach in NFV-RA by optimizing power and resource block/subcarrier allocation in the radio part. At the same time, it is reasonable to manage E2E resources in the context of the network slicing.  Therefore, we consider random and uniform power and subcarrier allocation, which can be considered for the radio part of \cite{8647504}. The curves, which represent the results of the comparison, are illustrated in Fig. \ref{SAR_COST_37}. As can be seen, our framework outperforms  the related literature in terms of both acceptance ratio and radio cost. Note that in this comparison, NFV-RA cost is ignored, and parameters are based on Table \ref{setting}.
 \begin{figure*}[!ht]
	\begin{center}$
		\begin{array}{cc}
		\subfigure[h][The acceptance ratio versus number of users.]{
			\includegraphics[width=.43\textwidth]{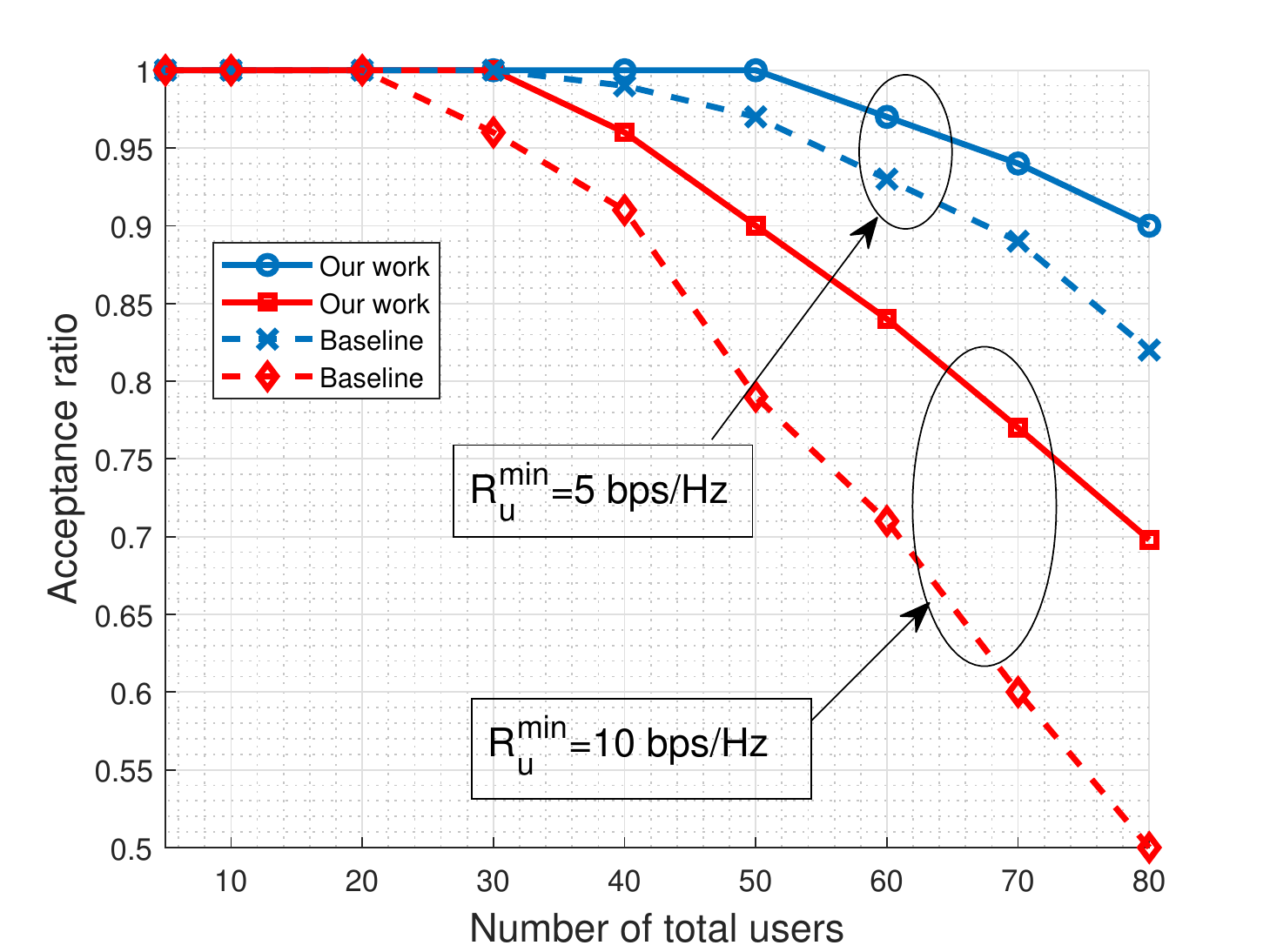}
			\label{Com_Accep_ratio_37}}
		\subfigure[h][Radio access network provisioning cost versus number of users.]{
			{\includegraphics[width=.43\textwidth]{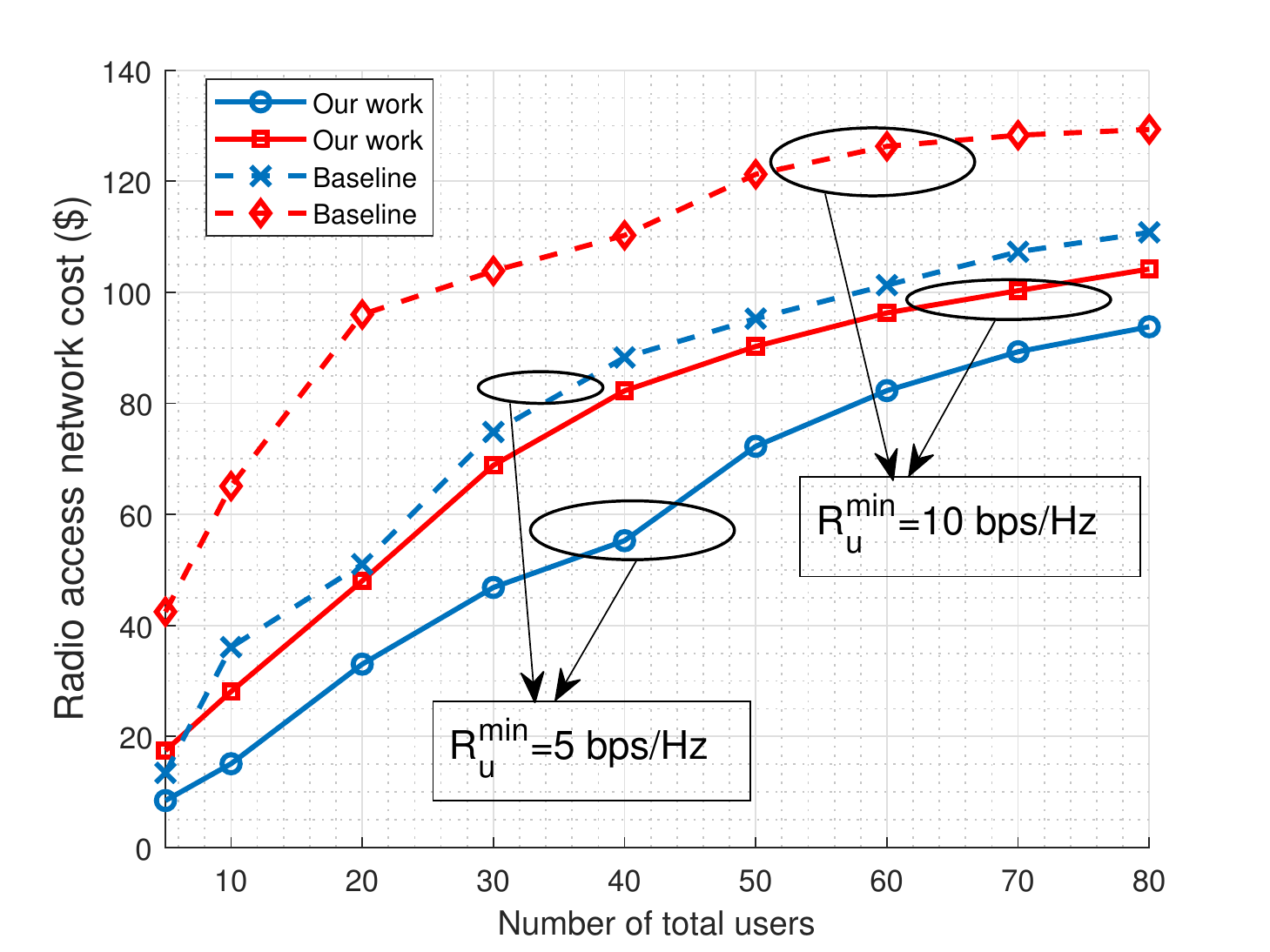}}
			\label{Com_Cost_37}}
		\end{array}$
	\end{center}
	\vspace{-1em}
	\caption{\textcolor{black}{Comparison with the state of the art framework/optimization {proposed} by  \cite{8647504}.}}
	\label{SAR_COST_37}
\end{figure*}
\subsubsection{Comparison With \cite{7949048}}
As stated in the related work, the authors in \cite{7949048} propose a power minimization problem for cloud-RAN to improve the energy efficiency
by optimizing the states of the computing units (CUs)  and radio access units (RAUs) (\textit{active or inactive}), RAU-user association, and  CU-RAU association\footnote{Where the VM of an RAU is assigned to appropriate CU based on optimization.}. 
However, the formulation of achieved data rate and user association are rather simplified, due to lack of consideration of the wireless channel and interference on SINR, and multiplication of the user association variable on date rate  [Eq. (4-6), \cite{7949048}]. 
Note that this simplification makes that the considered network and the optimization problem be impractical.
 At the same time, they assume that the transmission power of an RAU is equally allocated to each resource block similar to \cite{4786509}. Moreover, they do not consider the resource  block/subcarrier assignment problem and employ Max-SINR policy.  
Therefore, their user association is performed only based on the availability of the bandwidth on RAUs without considering the effect of channel condition on these resource blocks. 
    Moreover, our framework has some key differences compared to \cite{7949048}. By considering the objective of minimizing the energy as a cost, we compare our framework with \cite{7949048} in terms of optimizing power and subcarrier allocation. Fig. \ref{SAR_COST_41} gives the radio cost (Fig. \ref{Com_Cost_41}) and acceptance ratio (Fig. \ref{Com_Accep_ratio_41}) versus different number of users. It illustrates that the proposed framework performs better in terms of cost and user acceptance.  This is because of optimizing transmit power and subcarrier assignment in our scenario which have a pivotal role in wireless network performance and capacity \cite{8786250}, \cite{8654611}.
 \begin{figure*}[!ht]
 	\begin{center}$
 		\begin{array}{cc}
 		\subfigure[h][The acceptance ratio versus number of users.]{
 			\includegraphics[width=.43\textwidth]{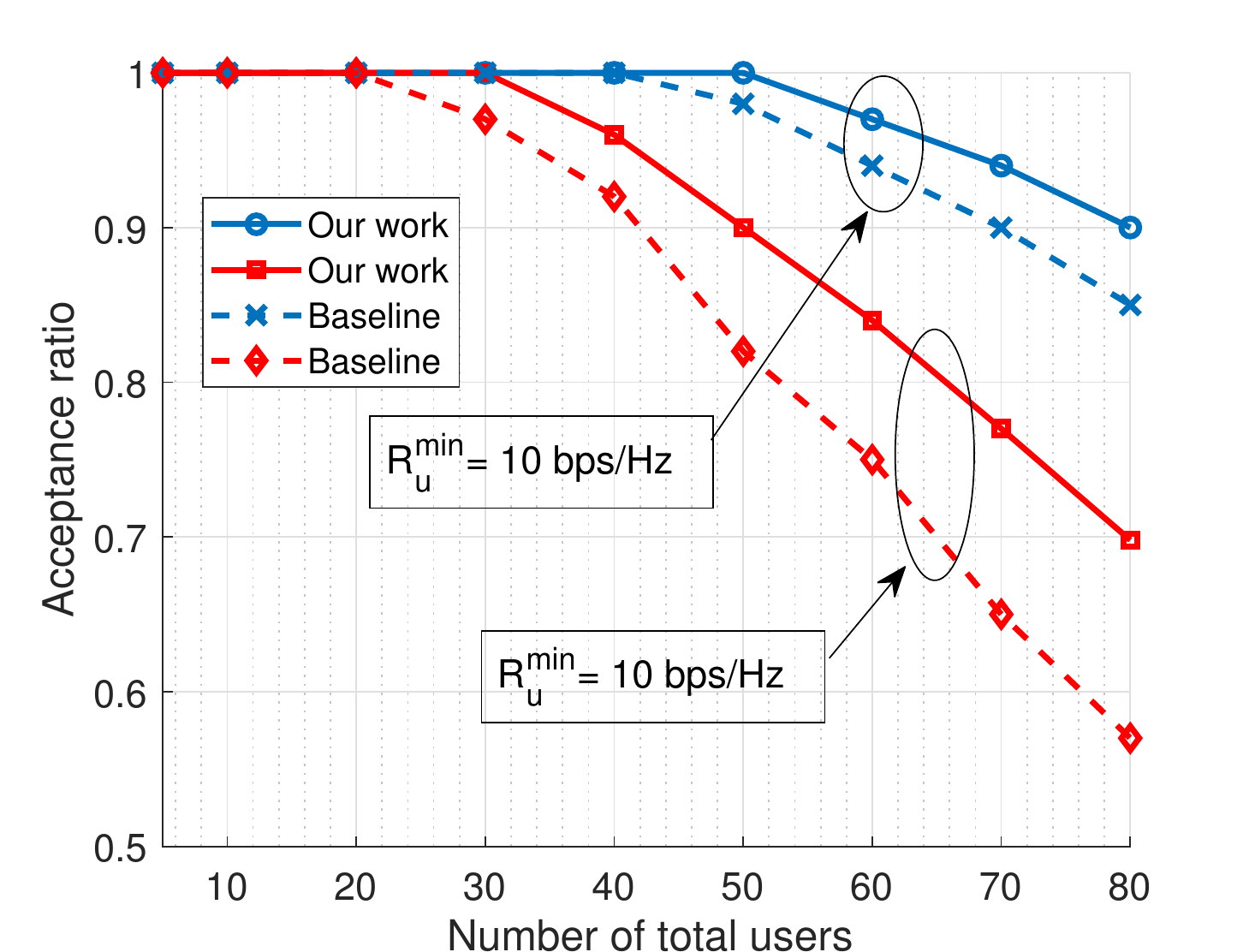}
 			\label{Com_Accep_ratio_41}}
 		\subfigure[h][Radio access network cost for provisioning versus number of users.]{
 			{\includegraphics[width=.43\textwidth]{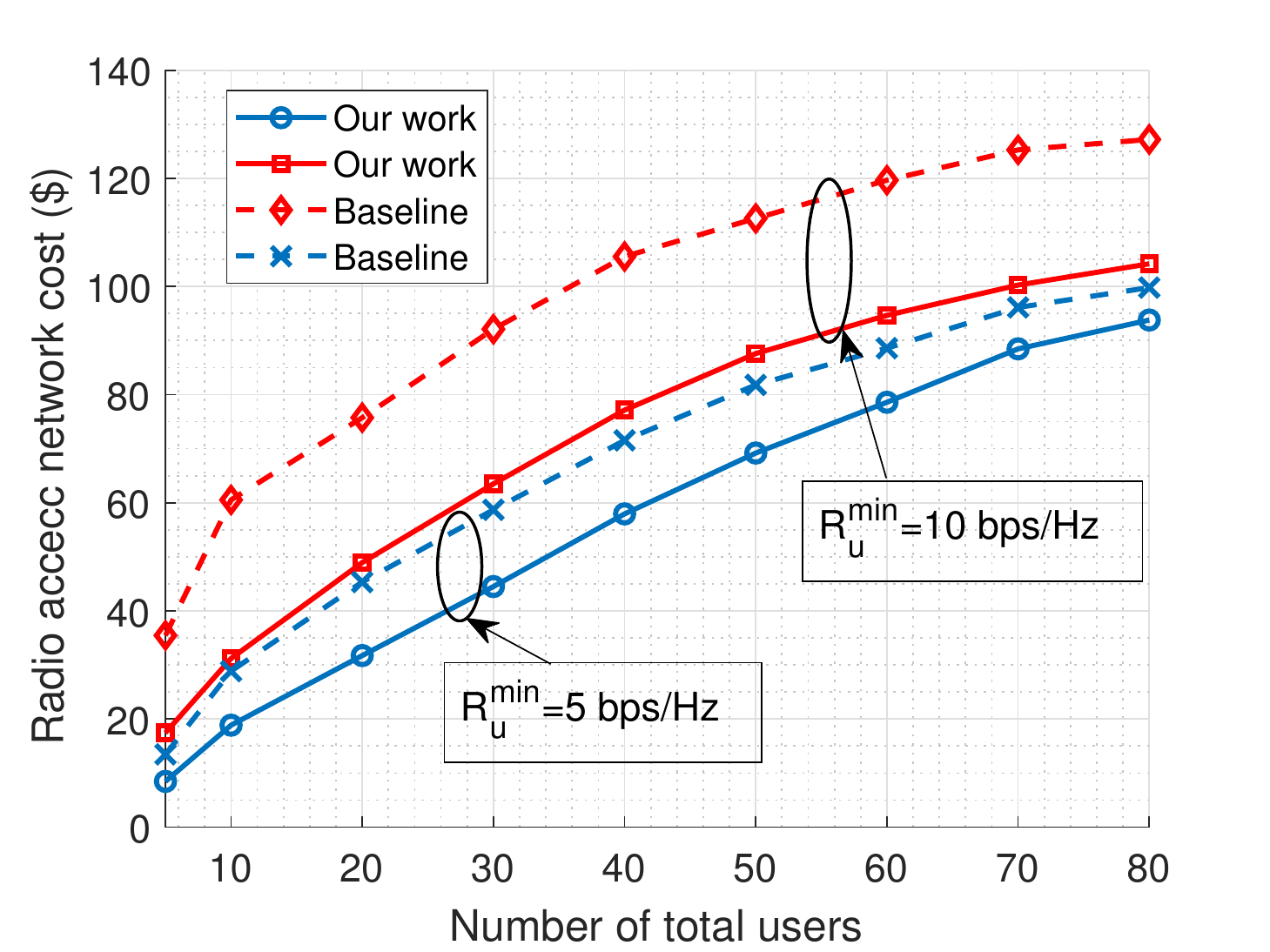}}
 			\label{Com_Cost_41}}
 	\end{array}$
 	\end{center}
 	\vspace{-1em}
 	\caption{\textcolor{black}{Comparison with the state of the art framework/optimization {proposed} by  \cite{7949048}.}}
 	\label{SAR_COST_41}
 	\end{figure*}
%
\subsubsection{{Optimality Gap}}\label{Optimal_Gap}
{Another baseline for investigation of the performance of the proposed solution algorithm is the optimality gap. In this regard,  we adopt the exhaustive search method \cite{8786250}. Since the complexity
of the exhaustive search method is very high and exponentially
grows with the size of  the system parameters, we exploit  it for a small scaled network. The considered
parameters and the corresponding solution methods
values are stated in Table \ref{Com_Al}. The  parameters are based
on Tables \ref{Core Network parameter setting} and \ref{setting}.
Results show that  our proposed algorithm gives approximately a $13.66 $\% optimality gap. 
	It can be readily noticed that the complexity of the proposed solution algorithm for the original problem is 
	 in the polynomial order compared to the exponential order for the optimal solution.  
\begin{table*}[!ht]
	\centering
	\small
	\caption{Performance comparison of different  solution algorithms} 
	\begin{tabular}{p{.1cm} p{.15cm}|p{1cm}|p{1.25cm}|p{1cm}|}
		\toprule 
		\midrule
		\multicolumn{5}{r}{Solution Methods}\\ 
		\cmidrule{3-5}
		&& ASM-Greedy  &  ASM-Proposed& Optimal\\ 
		\cmidrule{3-5}
		\multicolumn{1}{l}{\multirow{4}{*}{\begin{sideways}Scenarios\end{sideways}}}   &
		\multicolumn{1}{l}{Network Cost $\Big(M=S=5, N=10, U=K=10, D_{u}^{\max}=1.5, R^{\min}_{s}=10\Big)$}&254  &232  &217 \\
		\cmidrule{2-5}
		\multicolumn{1}{c}{}    &
		\multicolumn{1}{l}{SAR $\Big(M=S=5, U=40, N=15, K=10, D_{u}^{\max}=2, R^{\min}_{s}=15\Big)$}& 0.59 &0.64 & 0.7\\
		\cmidrule{2-5}
		\multicolumn{1}{c}{}    &	
		\multicolumn{1}{l}{Number of Activated server  $\Big(U=15, M=S=5, K=10, D_{u}^{\max}=1, R^{\min}_{s}=10\Big)$}& 16  &9&7 \\
		\midrule
	\end{tabular}
	\label{Com_Al}
\end{table*} 
\section{{Conclusion}}\label{conclusions}
In this paper, we proposed an E2E resource allocation and QoS assurance framework in NFV-enabled networks for heterogeneous services by realizing the network slicing paradigm.
	  This is achieved by formulating JRN-RA problem, in which the aim is to minimize the utilization of the radio resources in terms of power, spectrum, and the number of activated servers.
	  To solve the JRN-RA problem, we proposed the E-AC-ASM algorithm, where the elasticized problem is divided into three sub-problems, and then, each of them is solved efficiently. To solve NFV-RA, we proposed a low complexity greedy-based heuristic algorithm, which is based on minimizing the number of active servers in the network. 
	  By this scheme, we can reduce the resource \textcolor{black}{consumption}, such as processing, buffering, and power consumption belonging to each VM.

	We evaluated the performance of the proposed scheme with different parameters, such as QoS parameters, network resource capacities, and the performance metrics such as SAR and the number of active servers by numerical results. 
	 Moreover, to verify the performance of the proposed heuristic algorithm, we compared it with the state of the art schemes in terms of the number of the active servers and SAR (see Section \ref{Bench_Mark}). Our simulation results demonstrated that our solution algorithm outperforms the existing ones. By simulation and adopting the exhaustive search method, we investigated the optimality gap of the proposed iterative solution.  We showed that we can achieve about  $13.66 $\% optimality gap with polynomial order of complexity. 
 
 At the same time, the isolation guarantee between slices is challenging and becomes an interest research topic.  Since we studied the bit-level scheduling and E2E resource allocation, our frameworks are capable to isolate the traffic between different slices and flow of users.
%
 	 However, the comprehensive study in these areas focusing on automated orchestrator is planned for future work.
\appendices 
\section*{Appendix A}\label{Appendix_A}
\section*{Proof of Proposition 1}
Remind the objective function of problem (14)  as follows:
\begin{align}
&\Lambda(\bold{P},\boldsymbol{\rho}, \boldsymbol{\eta},A):=
\nonumber \\&
\underbrace{\mu_1\sum_{u\in \mathcal{U}} \sum_{k\in \mathcal{K}}\rho_{u}^{k}+\mu_1 \sum_{u\in \mathcal{U}} \sum_{k\in \mathcal{K}}p_{u}^{k}+\mu_3\sum_{n\in\mathcal{N}}\eta_{n}}_{\Psi}+W\cdot A. \nonumber
\end{align}
\textcolor{black}{We have the following relations between iterations ($z$ is the iteration number):}
\begin{align}\nonumber
&	\Lambda\Big(\bold{P}[z],\boldsymbol{\rho}[z],\boldsymbol{\eta}[z],A[z]\Big)=\min_{\bold{P},A}\Lambda\Big(\bold{P}[z],\boldsymbol{\rho}[z],\boldsymbol{\eta}[z],A[z]\Big)
\\\nonumber&
\le \Lambda\Big(\bold{P}[z-1],\boldsymbol{\rho}[z],\boldsymbol{\eta}[z],A[z-1]\Big)
\\\nonumber&
=\min_{\boldsymbol{\rho}}	\Lambda\Big(\bold{P}[z-1],\boldsymbol{\rho}[z],\boldsymbol{\eta}[z],A[z-1]\Big)
\\\nonumber &
\le  \Lambda\Big(\bold{P}[z-1],\boldsymbol{\rho}[z-1],\boldsymbol{\eta}[z],A[z-1]\Big)
\\\nonumber &
=\min_{\boldsymbol{\eta}} \Lambda\Big(\bold{P}[z-1],\boldsymbol{\rho}[z-1],\boldsymbol{\eta}[z],A[z-1]\Big) 
\\\nonumber&	\le	
\Lambda\Big(\bold{P}[z-1],\boldsymbol{\rho}[z-1],\boldsymbol{\eta}[z-1],A[z-1]\Big).
\end{align}
\textcolor{black}{	This means that the objective function of ASM decreases as the iteration number increases. There is also a lower bound (zero) and therefore there must exist a convergent sequence.}  In addition, with  QoS and ensuring the resource demand  constraints, i.e., (14b)-(14e), 
the ASM algorithm converges to a sub-optimal solution which corresponds to the sub-optimal solution of problem (14).
\section*{Appendix B}\label{Appendix_B}
\section*{Proof of Proposition 2}
Algorithm 2  works based on the values of  QoS metrics ($D^{\text{max}}_{s}$ and $R_{s}$) and capacity requirement of  NFs that are in the requested NSs. For the given system parameters such as the number of NFs and $\alpha^{f_{m}^{s}}$, just the value of $R_{u}$ is \textcolor{black}{variable} and depends on the value of the optimization variables. 		
Therefore the value of it has impact on the value of $\boldsymbol{\eta}=[\eta_{n}]$ that is output of  Algorithm 2. Based on (8) and server selection policy of Algorithm 2, $R_{u}$ is directly proportional to $\boldsymbol{\eta}$. Hence, if the value of $R_{u}$ is fixed or reduced at each iteration $z$, i.e., if $R_{u}^{(z)}\le R_{u}^{(z-1)}$, then we have $\boldsymbol{\eta}^{(z)}\le\boldsymbol{\eta}^{(z-1)}$.  As a result, the proposed algorithm is monotonic.

\bibliography{citation_E2E-RA}	

\begin{thebibliography}{10}
\providecommand{\url}[1]{#1}
\csname url@samestyle\endcsname
\providecommand{\newblock}{\relax}
\providecommand{\bibinfo}[2]{#2}
\providecommand{\BIBentrySTDinterwordspacing}{\spaceskip=0pt\relax}
\providecommand{\BIBentryALTinterwordstretchfactor}{4}
\providecommand{\BIBentryALTinterwordspacing}{\spaceskip=\fontdimen2\font plus
\BIBentryALTinterwordstretchfactor\fontdimen3\font minus
  \fontdimen4\font\relax}
\providecommand{\BIBforeignlanguage}[2]{{%
\expandafter\ifx\csname l@#1\endcsname\relax
\typeout{** WARNING: IEEEtran.bst: No hyphenation pattern has been}%
\typeout{** loaded for the language `#1'. Using the pattern for}%
\typeout{** the default language instead.}%
\else
\language=\csname l@#1\endcsname
\fi
#2}}
\providecommand{\BIBdecl}{\relax}
\BIBdecl

\bibitem{8320765}
I.~{Afolabi}, T.~{Taleb}, K.~{Samdanis}, A.~{Ksentini}, and H.~{Flinck},
  ``Network slicing and softwarization: A survey on principles, enabling
  technologies, and solutions,'' \emph{IEEE Communications Surveys Tutorials},
  vol.~20, no.~3, pp. 2429--2453, Mar. 2018.

\bibitem{8125672}
Z.~{Chang}, Z.~{Zhou}, S.~{Zhou}, T.~{Chen}, and T.~{Ristaniemi}, ``Towards
  service-oriented {5G}: Virtualizing the networks for
  everything-as-a-service,'' \emph{IEEE Access}, vol.~6, pp. 1480--1489, Dec.
  2018.

\bibitem{mijumbi2015design}
R.~{Mijumbi}, J.~{Serrat}, J.~{Gorricho}, N.~{Bouten}, F.~{De Turck}, and
  S.~{Davy}, ``Design and evaluation of algorithms for mapping and scheduling
  of virtual network functions,'' in \emph{Proc. IEEE Conference on Network
  Softwarization (NetSoft)}, Apr. 2015, pp. 1--9.

\bibitem{mijumbi2016network}
R.~{Mijumbi}, J.~{Serrat}, J.~{Gorricho}, N.~{Bouten}, F.~{De Turck}, and
  R.~{Boutaba}, ``Network function virtualization: {State}-of-the-art and
  research challenges,'' \emph{IEEE Communications Surveys Tutorials}, vol.~18,
  no.~1, pp. 236--262, Sep. 2016.

\bibitem{herrera2016resource}
J.~{Gil Herrera} and J.~F. {Botero}, ``Resource allocation in{ NFV: A}
  comprehensive survey,'' \emph{IEEE Transactions on Network and Service
  Management}, vol.~13, no.~3, pp. 518--532, Sep. 2016.

\bibitem{8675284}
A.~N. {Al-Quzweeni}, A.~Q. {Lawey}, T.~E.~H. {Elgorashi}, and J.~M.~H.
  {Elmirghani}, ``Optimized energy aware {5G} network function
  virtualization,'' \emph{IEEE Access}, vol.~7, pp. 44\,939--44\,958, Mar.
  2019.

\bibitem{riera2014virtual}
J.~F. {Riera}, E.~{Escalona}, J.~{Batallé}, E.~{Grasa}, and J.~A.
  {García-Espín}, ``Virtual network function scheduling: {Concept} and
  challenges,'' in \emph{Proc. 2014 International Conference on Smart
  Communications in Network Technologies (SaCoNeT)}, June 2014, pp. 1--5.

\bibitem{alliance2016description}
N.~Alliance, ``Description of network slicing concept,'' \emph{NGMN 5G P},
  vol.~1, p.~1, Jan. 2016.

\bibitem{7926921}
J.~{Ordonez-Lucena}, P.~{Ameigeiras}, D.~{Lopez}, J.~J. {Ramos-Munoz},
  J.~{Lorca}, and J.~{Folgueira}, ``Network slicing for 5{G} with {SDN}/{NFV}:
  Concepts, architectures, and challenges,'' \emph{IEEE Communications
  Magazine}, vol.~55, no.~5, pp. 80--87, May. 2017.

\bibitem{7243304}
R.~{Mijumbi}, J.~{Serrat}, J.~{Gorricho}, N.~{Bouten}, F.~{De Turck}, and
  R.~{Boutaba}, ``Network function virtualization: State-of-the-art and
  research challenges,'' \emph{IEEE Communications Surveys Tutorials}, vol.~18,
  no.~1, pp. 236--262, Sep. 2016.

\bibitem{etsi2014network1}
N.~ETSI, ``Network functions virtualisation (nfv); management and
  orchestration,'' \emph{NFV-MAN}, vol.~1, p.~v0, 2014.

\bibitem{etsi2014network}
------, ``Network functions virtualisation ({NFV}); terminology for main
  concepts in nfv,'' \emph{Group Specification, Dec}, Dec. 2018.

\bibitem{7962178}
H.~{Huang}, S.~{Guo}, J.~{Wu}, and J.~{Li}, ``Service chaining for hybrid
  network function,'' \emph{IEEE Transactions on Cloud Computing}, vol.~7,
  no.~4, Oct. 2019.

\bibitem{zeng2018stochastic}
D.~{Zeng}, J.~{Zhang}, L.~{Gu}, and S.~{Guo}, ``Stochastic scheduling towards
  cost efficient network function virtualization in edge cloud,'' in
  \emph{Proc. 2018 15th Annual IEEE International Conference on Sensing,
  Communication, and Networking (SECON)}, June 2018, pp. 1--9.

\bibitem{6782394}
M.~A. {Rodriguez} and R.~{Buyya}, ``Deadline based resource provisioningand
  scheduling algorithm for scientific workflows on clouds,'' \emph{IEEE
  Transactions on Cloud Computing}, vol.~2, no.~2, pp. 222--235, Apr. 2014.

\bibitem{7938391}
S.~{Ayoubi}, S.~{Sebbah}, and C.~{Assi}, ``A logic-based benders decomposition
  approach for the {VNF} assignment problem,'' \emph{IEEE Transactions on Cloud
  Computing}, vol.~7, no.~4, pp. 894--906, Oct. 2019.

\bibitem{8631169}
H.~{Hawilo}, M.~{Jammal}, and A.~{Shami}, ``Network function
  virtualization-aware orchestrator for service function chaining placement in
  the cloud,'' \emph{IEEE Journal on Selected Areas in Communications},
  vol.~37, no.~3, pp. 643--655, Mar. 2019.

\bibitem{7945848}
M.~{Mechtri}, C.~{Ghribi}, O.~{Soualah}, and D.~{Zeghlache}, ``{NFV}
  orchestration framework addressing {SFC} challenges,'' \emph{IEEE
  Communications Magazine}, vol.~55, no.~6, pp. 16--23, Jun. 2017.

\bibitem{cohen2015near}
R.~{Cohen}, L.~{Lewin-Eytan}, J.~S. {Naor}, and D.~{Raz}, ``Near optimal
  placement of virtual network functions,'' in \emph{Proc. IEEE Conference on
  Computer Communications (INFOCOM)}, 2015, pp. 1346--1354. Kowloon, Hong Kong.
  Apr.

\bibitem{hossain20155g}
E.~{Hossain} and M.~{Hasan}, ``{5G} cellular: key enabling technologies and
  research challenges,'' \emph{IEEE Instrumentation Measurement Magazine},
  vol.~18, no.~3, pp. 11--21, June 2015.

\bibitem{7143328}
H.~{Dahrouj}, A.~{Douik}, O.~{Dhifallah}, T.~Y. {Al-Naffouri}, and
  M.~{Alouini}, ``Resource allocation in heterogeneous cloud radio access
  networks: advances and challenges,'' \emph{IEEE Wireless Communications},
  vol.~22, no.~3, pp. 66--73, June 2015.

\bibitem{8004168}
H.~{Zhang}, N.~{Liu}, X.~{Chu}, K.~{Long}, A.~{Aghvami}, and V.~C.~M. {Leung},
  ``Network slicing based {5G} and future mobile networks: Mobility, resource
  management, and challenges,'' \emph{IEEE Communications Magazine}, vol.~55,
  no.~8, pp. 138--145, Aug. 2017.

\bibitem{7490404}
L.~{Qu}, C.~{Assi}, and K.~{Shaban}, ``Delay-aware scheduling and resource
  optimization with network function virtualization,'' \emph{IEEE Transactions
  on Communications}, vol.~64, no.~9, pp. 3746--3758, Sep. 2016.

\bibitem{game-theroy}
C.~{Pham}, N.~H. {Tran}, and C.~S. {Hong}, ``Virtual network function
  scheduling: A matching game approach,'' \emph{IEEE Communications Letters},
  vol.~22, no.~1, Jan. 2018.

\bibitem{7502870}
L.~{Qu}, C.~{Assi}, and K.~{Shaban}, ``Network function virtualization
  scheduling with transmission delay optimization,'' in \emph{in Proc IEEE/IFIP
  Network Operations and Management Symposium}, April Istanbul, Turkey, Apr.
  2016, pp. 638--644.

\bibitem{8501940}
X.~{Chen}, W.~{Ni}, I.~B. {Collings}, X.~{Wang}, and S.~{Xu}, ``Automated
  function placement and online optimization of network functions
  virtualization,'' \emph{IEEE Transactions on Communications}, vol.~67, no.~2,
  pp. 1225--1237, Feb. 2019.

\bibitem{8281644}
D.~{Li}, P.~{Hong}, K.~{Xue}, and j.~{Pei}, ``Virtual network function
  placement considering resource optimization and {SFC} requests in cloud
  datacenter,'' \emph{IEEE Transactions on Parallel and Distributed Systems},
  vol.~29, no.~7, pp. 1664--1677, July 2018.

\bibitem{7859379}
C.~{Pham}, N.~H. {Tran}, S.~{Ren}, W.~{Saad}, and C.~S. {Hong}, ``Traffic-aware
  and energy-efficient v{NF} placement for service chaining: Joint sampling and
  matching approach,'' \emph{IEEE Transactions on Services Computing}, pp.
  1--1, 2017.

\bibitem{7417401}
M.~T. {Beck} and J.~F. {Botero}, ``Coordinated allocation of service function
  chains,'' in \emph{Proc IEEE Global Communications Conference (GLOBECOM)},
  San Diego, CA, USA, Dec. 2015, pp. 1--6.

\bibitem{liu2017dynamic}
J.~{Liu}, W.~{Lu}, F.~{Zhou}, P.~{Lu}, and Z.~{Zhu}, ``On dynamic service
  function chain deployment and readjustment,'' \emph{IEEE Transactions on
  Network and Service Management}, vol.~14, no.~3, pp. 543--553, Sep. 2017.

\bibitem{8937740}
L.~{Qu}, C.~{Assi}, M.~{Khabbaz}, and Y.~{Ye}, ``Reliability-aware service
  function chaining with function decomposition and multipath routing,''
  \emph{IEEE Transactions on Network and Service Management}, pp. {},
  month={Dec. },, 2019.

\bibitem{riggio2016scheduling}
R.~{Riggio}, A.~{Bradai}, D.~{Harutyunyan}, T.~{Rasheed}, and T.~{Ahmed},
  ``Scheduling wireless virtual networks functions,'' \emph{IEEE Transactions
  on Network and Service Management}, vol.~13, no.~2, pp. 240--252, June 2016.

\bibitem{nejad2018vspace}
M.~A.~T. Nejad, S.~Parsaeefard, M.~A. Maddah-Ali, T.~Mahmoodi, and B.~H.
  Khalaj, ``{vSPACE}: {VNF} simultaneous placement, admission control and
  embedding,'' \emph{IEEE Journal on Selected Areas in Communications},
  vol.~36, no.~3, pp. 542--557, Mar. 2018.

\bibitem{kim2018performance}
H.~Kim, ``Performance evaluation of revised virtual resources allocation scheme
  in network function virtualization ({NFV}) networks,'' \emph{Cluster
  Computing}, vol.~22, no.~1, pp. 2331--2339, 2019.

\bibitem{8460139}
T.~{Ahmed}, A.~{Alleg}, R.~{Ferrus}, and R.~{Riggio}, ``On-demand network
  slicing using {SDN/NFV}-enabled satellite ground segment systems,'' in
  \emph{Proc IEEE Conference on Network Softwarization and Workshops
  (NetSoft)}, Montreal, QC, Canada, Jun. 2018, pp. 242--246.

\bibitem{8845306}
M.~{Femminella} and G.~{Reali}, ``Gossip-based monitoring of virtualized
  resources in 5g networks,'' in \emph{Proc IEEE Conference on Computer
  Communications Workshops (INFOCOM WKSHPS)}, Paris, France, France, Apr. 2019,
  pp. 378--384.

\bibitem{8647504}
R.~A. {Addad}, T.~{Taleb}, M.~{Bagaa}, D.~L.~C. {Dutra}, and H.~{Flinck},
  ``Towards modeling cross-domain network slices for {5G},'' in \emph{in Proc
  IEEE Global Communications Conference (GLOBECOM)}, Abu Dhabi, United Arab
  Emirates, Dec. 2018, pp. 1--7.

\bibitem{7835175}
Y.~{Mansouri}, A.~N. {Toosi}, and R.~{Buyya}, ``Cost optimization for dynamic
  replication and migration of data in cloud data centers,'' \emph{IEEE
  Transactions on Cloud Computing}, vol.~7, no.~3, pp. 705--718, Jul. 2019.

\bibitem{7934437}
H.~{Shah-Mansouri}, V.~W.~S. {Wong}, and R.~{Schober}, ``Joint optimal pricing
  and task scheduling in mobile cloud computing systems,'' \emph{IEEE
  Transactions on Wireless Communications}, vol.~16, no.~8, pp. 5218--5232,
  Aug. 2017.

\bibitem{8247219}
B.~{Yang}, W.~K. {Chai}, Z.~{Xu}, K.~V. {Katsaros}, and G.~{Pavlou},
  ``Cost-efficient {NFV}-enabled mobile edge-cloud for low latency mobile
  applications,'' \emph{IEEE Transactions on Network and Service Management},
  vol.~15, no.~1, pp. 475--488, Mar. 2018.

\bibitem{8638582}
Y.~{Chen}, N.~{Zhang}, Y.~{Zhang}, X.~{Chen}, W.~{Wu}, and X.~S. {Shen},
  ``Energy efficient dynamic offloading in mobile edge computing for internet
  of things,'' \emph{IEEE Transactions on Cloud Computing}, 2019.

\bibitem{8672634}
B.~{Ren}, D.~{Guo}, Y.~{Shen}, G.~{Tang}, and X.~{Lin}, ``Embedding service
  function tree with minimum cost for {NFV}-enabled multicast,'' \emph{IEEE
  Journal on Selected Areas in Communications}, vol.~37, no.~5, pp. 1085--1097,
  May. 2019.

\bibitem{7949048}
N.~{Yu}, Z.~{Song}, H.~{Du}, H.~{Huang}, and X.~{Jia}, ``Dynamic resource
  provisioning for energy efficient cloud radio access networks,'' \emph{IEEE
  Transactions on Cloud Computing}, vol.~7, no.~4, pp. 964--974, Oct. 2019.

\bibitem{9000731}
J.~{Chen}, H.~{Liu}, and H.~{Jia}, ``Cross-layer resource allocation in
  wireless-enabled {NFV},'' \emph{IEEE Wireless Communications Letters}, pp.
  1--1, IEEE Early Access, Feb. 2020.

\bibitem{7833146}
P.~D. {Diamantoulakis}, K.~N. {Pappi}, G.~K. {Karagiannidis}, H.~{Xing}, and
  A.~{Nallanathan}, ``Joint downlink/uplink design for wireless powered
  networks with interference,'' \emph{IEEE Access}, vol.~5, pp. 1534--1547,
  Jan. 2017.

\bibitem{ETSIG003}
{ETSI, GSNFV}, ``Network functions virtualisation ({NFV}); terminology for main
  concepts in {NFV},'' \emph{ETSI GS NFV}, vol.~2, no.~2, p.~V1, Aug. 2018.

\bibitem{etsi2013network}
------, ``Network functions virtualisation ({NFV}): Architectural framework,''
  \emph{ETSI GS NFV}, vol.~2, no.~2, p.~V1, 2018.

\bibitem{yoon2016nfv}
M.~S. Yoon and A.~E. Kamal, ``{NFV} resource allocation using mixed queuing
  network model,'' in \emph{in Proc. IEEE Global Communications Conference
  (GLOBECOM)}.\hskip 1em plus 0.5em minus 0.4em\relax IEEE, 2016, pp.
  Washington, DC, USA, 1--6. Dec. 2016.

\bibitem{28530}
{3GPP, TS 28.530}, ``Management and orchestration; concepts, use cases and
  requirements, {R}elease 15,'' Mar. 2019.

\bibitem{fajardo2015improving}
J.~O. Fajardo, I.~Taboada, and F.~Liberal, ``Improving content delivery
  efficiency through multi-layer mobile edge adaptation,'' \emph{IEEE Network},
  vol.~29, no.~6, pp. 40--46, 2015.

\bibitem{qu2016delay}
L.~Qu, C.~Assi, and K.~Shaban, ``Delay-aware scheduling and resource
  optimization with network function virtualization,'' \emph{IEEE Transactions
  on Communications}, vol.~64, no.~9, pp. 3746--3758, 2016.

\bibitem{chinneck2007feasibility}
J.~W. Chinneck, \emph{Feasibility and Infeasibility in Optimization::
  Algorithms and Computational Methods}.\hskip 1em plus 0.5em minus 0.4em\relax
  Springer Science \& Business Media, 2007, vol. 118.

\bibitem{ebrahimi2019joint}
S.~Ebrahimi, A.~Zakeri, B.~Akbari, and N.~Mokari, ``Joint resource and
  admission management for slice-enabled networks,'' \emph{arXiv}, pp.
  arXiv--1912, Accepted to Puplish in IEEE/IFIP NOSM, 2020.

\bibitem{grant2006matlab}
M.~Grant, S.~Boyd, and Y.~Y. {CVX}, ``Matlab software for disciplined convex
  programming, version 1.0 beta 3,'' \emph{Recent Advances in Learning and
  Control}, pp. 95--110, 2006.

\bibitem{mosek2015mosek}
A.~Mosek, ``The {MOSEK} optimization toolbox for {MATLAB} manual,'' 2015.

\bibitem{8786250}
A.~{Zakeri}, M.~{Moltafet}, and N.~{Mokari}, ``Joint radio resource allocation
  and sic ordering in {NOMA}-based networks using submodularity and matching
  theory,'' \emph{IEEE Transactions on Vehicular Technology}, vol.~68, no.~10,
  pp. 9761--9773, Oct. 2019.

\bibitem{7899529}
T.~M. {Ho}, N.~H. {Tran}, S.~M. {Ahsan Kazmi}, and C.~S. {Hong}, ``Dynamic
  pricing for resource allocation in wireless network virtualization: A
  stackelberg game approach,'' in \emph{2017 International Conference on
  Information Networking (ICOIN)}, Da Nang, Vietnam, Apr. 2017, pp. 429--434.

\bibitem{osseiran2015manufacturing}
S.~{Misra}, A.~{Mondal}, and S.~{Khajjayam}, ``Dynamic big-data broadcast in
  fat-tree data center networks with mobile {IoT} devices,'' \emph{IEEE Systems
  Journal}, pp. 1--8, Mar. 2019.

\bibitem{6968961}
S.~{Mehraghdam}, M.~{Keller}, and H.~{Karl}, ``Specifying and placing chains of
  virtual network functions,'' in \emph{Proc. IEEE 3rd International Conference
  on Cloud Networking (CloudNet)}, Oct. 2014, pp. 7--13.

\bibitem{4786509}
J.~{Huang}, V.~G. {Subramanian}, R.~{Agrawal}, and R.~A. {Berry}, ``Downlink
  scheduling and resource allocation for {OFDM} systems,'' \emph{IEEE
  Transactions on Wireless Communications}, vol.~8, no.~1, pp. 288--296, Jan,
  2009.

\bibitem{8654611}
Q.~{Wang} and F.~{Zhao}, ``Joint spectrum and power allocation for {NOMA}
  enhanced relaying networks,'' \emph{IEEE Access}, vol.~7, pp.
  27\,008--27\,016, Feb. 2019.

\end{thebibliography}
\bibliographystyle{ieeetran}
\end{document}